\newtheorem{thm}{Theorem} [section]
\newtheorem{prop}[thm]{Proposition}
\newtheorem{lem}[thm]{Lemma}
\newtheorem{coro}[thm]{Corollary} 
\theoremstyle{definition}
\newtheorem{defi}[thm]{Definition}
\theoremstyle{remark} 
\newtheorem{rem}[thm]{Remark}
\theoremstyle{definition}
\newtheorem*{nota*}{Notation}
\newcommand{\K}{{\mathbf K}}
\newcommand{\Q}{\overline{\mathbb Q}}
\newcommand{\C}{\mathbb{C}}
\newcommand{\Z}{\mathbb{Z}}
\newcommand{\N}{\mathbb{N}}
\renewcommand{\v}{{\boldsymbol v}}
\renewcommand{\r}{{\boldsymbol r}}
\newcommand{\e}{{\boldsymbol e}}
\newcommand{\f}{{\boldsymbol f}}
\newcommand{\g}{{\boldsymbol g}}
\newcommand{\h}{{\boldsymbol h}}
\newcommand{\x}{{\boldsymbol x}}
\newcommand{\y}{{\boldsymbol y}}
\newcommand{\puip}{\phi_p}
\newcommand{\puid}{\phi_d}
\newcommand{\val}{v_0}
\newcommand{\T}{\Theta}
\renewcommand{\t}{{\boldsymbol \theta}}
\newcommand{\Gtr}{\Psi} 
\newcommand{\gtr}{{\boldsymbol \psi}}
\newcommand{\Gam}{U}
\newcommand{\Ainv}{B}
\newcommand{\I}{\mathfrak{I}}
\newcommand{\V}{\mathfrak{V}}
\newcommand{\W}{\mathfrak{W}}
\newcommand{\inters}{\mathfrak{X}}
\newcommand{\intersd}{\mathfrak{X}_d}
\newcommand{\Const}{\Lambda}
\newcommand{\M}{{\rm M}}
\newcommand{\MM}{{\rm MM}}
\newcommand{\id}{I}
\numberwithin{equation}{section}
\begin{document}
\title{An algorithm to recognize regular singular Mahler systems}

\author{Colin Faverjon}
\address{Univ Lyon, Universit\'e Claude Bernard Lyon 1, CNRS UMR 5208, Institut Camille Jordan, F-69622 Villeurbanne, France}
\email{faverjon@math.univ-lyon1.fr}

\author{Marina Poulet}
\address{Univ Lyon, Universit\'e Claude Bernard Lyon 1, CNRS UMR 5208, Institut Camille Jordan, F-69622 Villeurbanne, France}
\email{poulet@math.univ-lyon1.fr}

\subjclass[2020]{Primary 39A06, 68W30; Secondary 11B85}
\keywords{Mahler equations, regular singularity, algorithm.} 

\date{\today}

\thanks{The work of the first author was supported by the European Research Council (ERC) under the 
European Union's Horizon 2020 research and innovation programme 
under the Grant Agreement No 648132. The work of the second author was performed within the framework of the LABEX MILYON (ANR-10-LABX-0070) of Universit\'e de Lyon, within the program "Investissements d'Avenir" (ANR-11-IDEX- 0007) operated by the French National Research Agency (ANR)}

\begin{abstract}
This paper is devoted to the study of the analytic properties of Mahler systems at $0$. We give an effective characterisation of Mahler systems that are regular singular at $0$, that is, systems which are equivalent to constant ones. Similar characterisations already exist for differential and ($q$-)difference systems but they do not apply in the Mahler case. This work fills in the gap by giving an algorithm which decides whether or not a Mahler system is regular singular at $0$. In particular, it gives an effective characterisation of Mahler systems to which an analog of Schlesinger's density theorem applies.
\end{abstract}
\maketitle

\vspace{-1.5cm}
\setcounter{tocdepth}{1}
\setcounter{secnumdepth}{3}
\titlecontents{section}
[0.5em]
{}
{\contentslabel{1.3em}}
{\hspace*{-2.3em}}
{\titlerule*[1pc]{.}\contentspage}
{\addvspace{2em}\bfseries\large}
\titlecontents{subsection}
[3.8em]
{} 
{\contentslabel{2em}}
{\hspace*{-3.2em}}
{\titlerule*[1pc]{.}\contentspage}
\titlecontents{subsubsection}
[6.1em]
{} 
{\contentslabel{2.4em}}
{\hspace*{-4.1em}}
{\titlerule*[1pc]{.}\contentspage}
\renewcommand{\contentsname}{Contents}

 \pdfbookmark[0]{\contentsname}{Contents}
\tableofcontents

\vspace{-0.3cm}

\section{Introduction}

Let $\mathbf K$ be the field of Puiseux series with algebraic coefficients i.e. the field
$$
\mathbf K := \bigcup_{d \in \N^\star} \Q\left(\left(z^{1/d}\right)\right).
$$ 
For an integer $p\geq 2$, we define the operator
$$\begin{array}{rccl}
\puip :  & \K  & \rightarrow & \K  \\
         & f(z) & \mapsto & f\left(z^p\right).
\end{array}$$
The map $\puip$ naturally extends to matrices with entries in $\K$. A \emph{$p$-Mahler system} or, for short, a \emph{Mahler system} is a system of the form
\begin{equation}\label{eq:Mahler_at_0}
\puip\left(Y\right)=AY,\qquad A\in {\rm GL}_m\left(\Q\left(z\right)\right)\,.
\end{equation}
The study of Mahler systems began with the work of Mahler in 1929 \cite{Ma29,Ma30a,Ma30b}. Nowadays, there is an increased interest in their study because they are related to many areas such as automata theory or divide-and-conquer algorithms (see for example \cite{AF17,AF20,Co68,Du93,MF80,Ni97,Ph15} for a non-exhaustive bibliography).  In this paper, we focus on the singularity of Mahler systems at $0$. Similarly to differential or ($q$-)difference systems, a singularity at $0$ of a Mahler system can be regular. In that case, we say that the Mahler system is regular singular at $0$.

\begin{defi}
A $p$-Mahler system \eqref{eq:Mahler_at_0} is \emph{regular singular at $0$}, or for short \emph{regular singular}, if there exists a matrix $\Gtr\in {\rm GL}_m\left(\mathbf K\right)$ such that $\puip(\Gtr)^{-1}A\Gtr$ is a constant matrix.
\end{defi}

Note that some authors use the term ``Fuchsian'' to mean ``regular singular''. The singularities of differential systems and then of ($q$-)difference systems have been widely studied and algorithms have been given. 
One of the main interests in studying the regular singular systems is the good analytical properties of their solutions. 
A linear differential system is regular singular at $z=0$ if and only if all of its solutions have moderate growth at $z=0$, that is, at most a polynomial growth (see for example \cite[Th. 5.4]{VDPS03}). There exist criteria and algorithms to recognize regular singular differential systems, see for example \cite{Barkatou95,Bir13,Hi87,HW86,Moser59}. Then, algorithms have been given for other systems such as difference systems and $q$-difference systems (see for instance \cite{Barkatou89,BBP,BP96,Pr83}). In \cite{BBP}, the authors give a general algorithm for recognizing the regular singularity of linear functional equations satisfying some general properties. This algorithm applies to many systems such as differential systems and ($q$-)difference systems. However, this general algorithm does not apply to Mahler systems for the Mahler operator $\puip$ does not preserve the valuation at $0$. 
The aim of this paper is to fill this gap and to present an algorithm which decides whether or not a Mahler system is regular singular at $0$.

A remarkable property of linear differential systems at regular singular points is that Schlesinger's density theorem applies: the monodromy group is Zariski-dense in the Galois group of the system \cite[Cor. 5.2]{VDPS03}. An analog of this theorem was proved for ($q$-)difference systems also using the regular singular property (see \cite[Cor.~9.10]{vdPS97}, \cite[Th. 12.14]{vdPS97} or \cite{Sau03}). Recently, the second author has proved an analog of the Schlesinger's density theorem for Mahler systems \cite{Pou20} under regular singular conditions. Thus, the present work gives an algorithm to determine whether this theorem applies or not.

In general, a Mahler system does not admit a fundamental matrix of solutions in ${\rm GL}_m\left(\K\right)$. To find such a matrix, one has to consider some field extensions of $\K$. Let $\boldsymbol{\mathcal H}$ denote the field of Hahn series, that is the set of series of the form $\sum_{n \in \mathcal N} a_nz^n$, $\mathcal N\subset \mathbb Q$, where $\lbrace n\in\mathcal N \mid a_n\neq 0\rbrace$ is a well-ordered set (see \cite[Sec. 2]{Ro20}). One can extend the operator $\puip$ to $\boldsymbol{\mathcal H}$. In \cite{Ro20}, Roques proved that for every $p$-Mahler system there exists a matrix $\Gtr \in {\rm GL}_m(\boldsymbol{\mathcal H})$ such that $\puip(\Gtr)^{-1}A\Gtr$ is a constant matrix. Moreover, any constant Mahler system has a fundamental matrix of solutions in some ring containing some determinations of the functions $\log\log(z)$ and $\log^a(z)$, $a\in \Q\setminus\{0\}$ 
(see \cite{Ro18}). Thus, any Mahler system has a fundamental matrix of solutions of the form $\Gtr\Theta$, where $\Gtr$ is matrix with entries in $\boldsymbol{\mathcal H}$ and $\Theta$ is a fundamental matrix of solutions of a constant system. Among them, the regular singular systems are those for which one can choose $\Gtr$ in ${\rm GL}_m(\K)$. The restriction to the subfield $\K$ of $\boldsymbol{\mathcal H}$ is essential to preserve the analytic properties of the system. 
In particular, if $\f \in \K^m$ is a column vector, solution of a Mahler system, it follows from Rand\'e's Theorem \cite{BCR13, Ran92} that the entries of $\f$ are ramified meromorphic functions inside the unit disk.

\begin{defi}
Let $p\geq 2$ be an integer, and let $A, B \in {\rm GL}_m\left(\Q(z)\right)$. Let $\mathbf k \subset \boldsymbol{\mathcal H}$ be a field. The $p$-Mahler systems 
$$
\puip(Y)=AY\ \text{ and }\ \puip(Y)=BY
$$
are said to be $\mathbf k$-\emph{equivalent} if there exists a matrix $\Gtr \in {\rm GL}_m(\mathbf k)$ such that
$$
\puip(\Gtr)B=A\Gtr\, .
$$
In that case, the matrix $\Gtr$ is called an \emph{associated gauge transformation}.
\end{defi}

This choice of equivalence class ensures that if $Y$ is such that $\puip(Y)=AY$ then $\puip\left(\Gtr^{-1} Y\right)=B\left(\Gtr^{-1} Y\right)$. 
With this definition, the regular singular systems are the ones that are $\K$-equivalent to constant systems.

A Mahler system is said to be \emph{strictly Fuchsian at $0$} if the entries of $A$ are analytic functions at $0$ and $A(0)\in {\rm GL}_m\left(\Q\right)$. In other words, a system is strictly Fuchsian at $0$ if $0$ is not a singularity of this system.
It follows from \cite[Prop. 34]{Ro18} that systems which are strictly Fuchsian at $0$ are regular singular at $0$. Since the $p$-Mahler system associated with some matrix $A$ and the $p$-Mahler system associated with the matrix $z^\nu A$ for some $\nu\in \Z$ are ${\mathbf K}$-equivalent, there exist systems which are regular singular at $0$ but are not strictly Fuchsian at $0$.
Note that not all Mahler systems are regular singular at $0$. For example, the system 
$$
 \phi_2(Y)=\frac{1}{2}\left(\begin{array}{cc} 1 & 1 \\ \frac{1}{z}& \frac{-1}{z}  \end{array}\right)Y
$$
associated with the generating series of the \textit{Rudin-Shapiro sequence} is not regular singular at $0$ (see Section \ref{sec:Example}).
The main result of this paper reads as follows.

\begin{thm}\label{thm:algo}
Let $A \in {\rm GL}_m\left(\Q(z)\right)$ and  $p \geq 2$. There exists an algorithm which determines whether or not the Mahler system \eqref{eq:Mahler_at_0} is regular singular at $0$. This is done by computing the dimension of an explicit $\Q$-vector space.  If the system is regular singular at $0$ the algorithm computes a constant matrix to which the system is equivalent and a truncation at an arbitrary order of the Puiseux expansion of an associated gauge transformation.
\end{thm}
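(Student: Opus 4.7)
The plan is to reformulate regular singularity as the existence of a pair $(\Gtr, C)$ with $\Gtr \in {\rm GL}_m(\K)$ and $C \in {\rm GL}_m(\Q)$ solving $\puip(\Gtr)C = A\Gtr$, and then to reduce this existence problem to a finite-dimensional $\Q$-linear algebra computation. First I would bound the ramification $d$ so that $\Gtr$ may be searched in $\Q((z^{1/d}))^{m\times m}$: since any constant $p$-Mahler system has solutions built from $\log^a(z)$ and $\log\log(z)$ with exponents governed by the eigenvalues of its matrix, the denominators appearing in a potential $\Gtr$ are controlled by algebraic data extracted from $A$. I would likewise bound the admissible valuations $\val(\Gtr)$ by a Newton-polygon-type argument adapted to $\puip$.

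With $d$ and a finite range of valuations fixed, expanding $\Gtr = \sum_{n\geq n_0} \Gtr_n z^{n/d}$ turns $\puip(\Gtr)C = A\Gtr$ into an infinite $\Q$-linear system on the unknown coefficients $\Gtr_n$ and on the entries of $C$. The heart of the argument is to identify a threshold $N$, computable from $A$, and define the finite-dimensional $\Q$-vector space $\V$ consisting of truncations $\Gtr^{(N)} = \sum_{n_0 \leq n < N} \Gtr_n z^{n/d}$ coupled with candidate constants $C$ that satisfy the truncated equation modulo $z^{N/d}$. The algorithm then computes $\dim_\Q \V$, tests whether $\V$ contains a truncation with invertible leading coefficient, and, if so, extracts $C$ together with any desired truncation of $\Gtr$ by iterating the recurrence finitely many additional times.

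The main obstacle, and where the substance of the proof lies, is showing that $N$ can be chosen so that every valid truncation in $\V$ extends to a genuine solution $\Gtr \in {\rm GL}_m(\K)$. The extension proceeds term by term: each new coefficient $\Gtr_n$ satisfies a linear equation determined by the previous ones, and solvability depends on a linear operator whose kernel and cokernel are controlled, in the Fuchsian case, by the spectra of $A(0)$ and $C$. The non-Fuchsian case must first be reduced to an equivalent system with better local behaviour at $0$, in the spirit of Moser's algorithm. This reduction is delicate precisely because $\puip$ scales the valuation by $p$ rather than preserving it --- exactly the obstruction the authors identify as preventing the general framework of \cite{BBP} from being applied directly to the Mahler case --- and circumventing it is what distinguishes the present result from its differential and $(q\text{-})$difference predecessors.
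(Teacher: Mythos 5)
There is a genuine gap, and it sits exactly where you say "the substance of the proof lies." First, your central object $\V$ is not a $\Q$-vector space: the equation $\puip(\Gtr)C=A\Gtr$ is \emph{bilinear} in the unknowns, since the truncated conditions involve the products $\Gtr_n C$ of unknown coefficients of $\Gtr$ with unknown entries of $C$. The set of pairs (truncation, candidate $C$) is therefore an algebraic variety, so "compute $\dim_\Q\V$" is ill-posed, and it cannot deliver the explicit vector space promised in the statement. The paper removes $C$ from the unknowns altogether: after bounding the ramification index $d$ and the valuation, the coefficients $\T_n$ with $n>\mu_d$ are \emph{uniquely and unconditionally} determined by the finite block $\nu_d\le n\le\mu_d$ (Remark \ref{rem:determine_sequence_between_mu_nu}), the constraints coming from the indices $n<\nu_d$ give $\ker(N_d)$, and the characterisation is the stabilised intersection $\intersd=\bigcap_{n\in\Z}M_d^n\ker(N_d)$ (Theorem \ref{thm:Espace}, Lemma \ref{lem:rangeX}); the constant matrix is then recovered \emph{a posteriori} from $M_dE=ER$, $\Const=R^{-1}$, with no eigenvalue computation. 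Your invertibility test is also wrong: requiring an invertible leading coefficient excludes genuine gauge transformations (the paper allows $\T_{\nu_d}=0$, as columns may have different valuations); invertibility of $\Gtr$ is instead proved by showing that $\Q$-linear independence of the columns forces $\Q((z))$-linear independence (Lemma \ref{lem:indeplineairecolonnes}, via Nishioka's theorem \cite{Ni97}), a point your proposal does not address at all.

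Second, your extension mechanism is the one that provably does not transfer to the Mahler setting. You propose a term-by-term solvability analysis governed by the spectra of $A(0)$ and $C$ in the Fuchsian case, and "a reduction in the spirit of Moser's algorithm" otherwise --- but you give no such reduction, and the paper's starting point is precisely that Moser/\cite{BBP}-type reductions fail here because $\puip$ multiplies the valuation by $p$ instead of preserving it. In the paper's recursion $\T_n\Const^{-1}=\sum_{k+p\ell=n}\Ainv_k\T_\ell$, for $n>\mu_d$ only indices $\ell<n$ occur, so there is no solvability obstruction past the explicit threshold $\mu_d$; all obstructions are concentrated in finitely many linear conditions propagated through powers of $M_d$. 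Without this structural fact (or a substitute), you have neither a computable threshold $N$, nor a proof that valid truncations extend, nor a finite decision procedure. Finally, your ramification bound needs a concrete mechanism: the eigenvalues of $C$ are part of what is being solved for, so "exponents governed by the eigenvalues" cannot be used; the paper instead gets $d\le p^m-1$, coprime to $p$ and independent of the unknown eigenvalues, from the cyclic vector lemma and the Newton polygon of the associated scalar equation.
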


In \cite{CDDM18}, the authors built an algorithm to decide whether or not a linear homogeneous Mahler equation has a complete basis of solutions in $\K$. If it is the case, the associated Mahler system is regular singular at $0$ and $\K$-equivalent to the identity matrix. However, in general one does not know  \textit{a priori} the constant matrix to which a regular singular Mahler system is $\K$-equivalent. The algorithm mentioned in Theorem \ref{thm:algo} not only recognizes regular singular systems but also computes this constant matrix. From this point of view, the present work is a generalisation of this result of \cite{CDDM18}. From \cite{Ro20}, we know that every Mahler system is $\boldsymbol{\mathcal H}$-equivalent to a constant system but this work is not effective so it does not enable us to recognize regular singular Mahler systems. Our algorithm has been implemented in Python 3\footnote{The implemented algorithm is available at the following URL address:\\ \url{https://hal.archives-ouvertes.fr/hal-03147365/file/AlgoRegularSingularMahlerSyst.py}.}. Some bounds for the complexity are given in Section \ref{sec:Algo}.

\begin{rem}
It is also interesting to look at Mahler systems around other fixed points of $\puip$ such as $1$ or $\infty$. We say that a $p$-Mahler system is regular singular at $1$ (\textit{resp.} at $\infty$) if it is ${\bf L}$-equivalent to a constant matrix system, where we let ${\bf L}$ denote the field of Puiseux series $\bigcup_{d \in \N^\star} \Q\left(\left((z-1)^{1/d}\right)\right)$ (\textit{resp.} $\bigcup_{d \in \N^\star} \Q\left(\left(z^{-1/d}\right)\right)$). Using the change of variables $z=e^u$ one can test the regular singularity at $1$ using the theory of $q$-difference linear systems (with $q=p$). Furthermore, one can know if a system is regular singular at $\infty$ by applying Theorem \ref{thm:algo} to the system with matrix $A(1/z)$. 
\end{rem}

We present our strategy of proof. Assume that the Mahler system is $\mathbf K$-equivalent to a constant matrix $\Const$ with an associated gauge transformation $\Gtr \in {\rm GL}_m(\mathbf K)$. We can assume that $\Const$ is a Jordan matrix. Thus, using the relation $\puip(\Gtr)\Const=A\Gtr$, the columns $\gtr_1,\ldots,\gtr_m$ of $\Gtr$ are solutions of equations of the form
\begin{equation}\label{eq:Mahler_columns}
\lambda_i\puip(\gtr_i) + \epsilon_i \puip(\gtr_{i-1})= A\gtr_i
\end{equation}
where $\epsilon_i \in \{0,1\}$, $\lambda_i$ is an eigenvalue of $\Const$ and $\gtr_0:=0$. Thus, to prove that some Mahler system is regular singular at $0$, we will have to solve equations of the form \eqref{eq:Mahler_columns}. Section \ref{sec:Ramification_valuation} is devoted to the study of the solutions of such equations. We compute some bounds for their valuations and we find some admissible ramification indexes. Then we exhibit some linear equations which must be satisfied by the first coefficients of such solutions. In Section \ref{sec:preuvetheorem} we consider the vector space of solutions of such linear equations. Then we prove our main theorem (Theorem \ref{thm:Espace}) which states that the system is regular singular at $0$ if and only if the dimension of this vector space is precisely $m$. Then, in Section \ref{sec:Algo} we describe the algorithm of Theorem \ref{thm:algo} and we compute a bound for its complexity. Section \ref{sec:Example} is devoted to the study of some examples. Finally, in Section \ref{sec:OpenQuestions} we discuss some open problems.

\begin{nota*}
We let $\Q$ denote the algebraic closure of $\mathbb Q$ in $\C$ and $\Q^\star=\Q \setminus\{0\}$. We let $\val : \Q[[z]] \mapsto \Z \cap \{\infty\}$ denote the valuation at $z=0$: for $f\in\Q[[z]]$, $\val(f)$ is the supremum of the integers $v\in \N$ such that $f$ belongs to the ideal $z^v\Q[[z]]$. It extends uniquely as a valuation from $\K$ to $\mathbb Q$. We also extend it to the set of matrices with entries in $\K$ where $\val(U)$ denotes the minimum of the valuations at $0$ of the entries of a matrix $U$. 
Let $f$ be a polynomial. We let $\deg(f)$ denote the degree of $f$. If $M$ is a matrix with coefficients in $\Q(z)$, and $f$ is the least common multiple of the denominators of the entries of $M$, we define $\widetilde{M}:= fM$ and 
$$\deg(M):=\max\left(\deg(f),\deg\left(\widetilde{M}\right)\right).$$

Our bounds for the complexity of the algorithms presented here are given in terms of arithmetical operations in $\Q$. Given $f,g:\N \mapsto \mathbb R_{\geq 0}$ we use the classical Landau notation $f(n)=\mathcal O(g(n))$ if there exists a positive real number $\kappa$ such that $f(n) \leq \kappa g(n)$ for every large enough integer $n\in \N$. Similarly, we write $f(n)=\widetilde{\mathcal O}(g(n))$ if $f(n) =\mathcal O(g(n)\log(n)^c)$ for some $c\in \N$. Given a $n>0$, we let $\M(n)$ denote the complexity of the product of two polynomials of degree at most $n$, and $\MM(n)$ denote the complexity of the product of two matrices with at most $n$ rows and $n$ columns.

For the sake of clarity, we shall denote by roman capital letters $A,\Ainv,\ldots$ matrices whose coefficients are effectively known and by Greek capital letters $\Gtr, \T, \Const, \ldots$ the other matrices.
While matrices are denoted by capital letters $\Gtr,\T,\ldots$, the columns of these matrices should be denoted by bold lowercase letters $\gtr,\t,\ldots$
\end{nota*}

\section{Vector solutions of some Mahler equations} \label{sec:Ramification_valuation}

We fix some Mahler system \eqref{eq:Mahler_at_0}. Consider a vector of Puiseux series $\g \in \mathbf K^m$ and some nonzero algebraic number $\lambda \in \Q^\star$. The aim of this section is to study the solutions $\f\in \mathbf K^m$ of the system
\begin{equation}\label{eq:linear_system}
\lambda \puip(\f) + \puip(\g) = A\f\, .
\end{equation}
Precisely, we compute some integer $d$ such that any solution of this equation belongs to $\Q\left(\left(z^{1/d}\right)\right)^m$ assuming that $\g$ also belongs to $\Q\left(\left(z^{1/d}\right)\right)^m$. Then, this integer $d$ being fixed, we exhibit an integer $\nu_d$ such that $\val(\f)$, the minimum of the valuations of the entries of the vector $\f$, is at least $\nu_d/d$, assuming that $\val(\g) \geq \nu_d/d$. Finally, we prove that $\f$ is uniquely determined by its first coefficients and that these coefficients must satisfy some linear equations. 

\subsection{The cyclic vector lemma}
In \cite{CDDM18}, the authors developed a method to solve linear Mahler equations, that is, equations of the form
\begin{equation}\label{eq:eqMahler}
q_0 y + q_1\puip(y)+q_2 \puip^{2}(y)+\cdots+q_{m-1}\puip^{m-1}(y)-\puip^{m}(y)=0\, ,
\end{equation}
with $q_0,\ldots,q_{m-1} \in \Q(z)$. Actually, one can use these results to solve linear systems of the form \eqref{eq:linear_system}. In order to do that,
we use a result known as the cyclic vector lemma. For the sake of completeness, we develop here a proof of this result.

\begin{thm}[cyclic vector lemma]\label{thm:CyclicVector}
Any Mahler system \eqref{eq:Mahler_at_0} is $\Q(z)$-equivalent to a companion matrix system, i.e., there exist a matrix $P \in {\rm GL}_m\left(\Q(z)\right)$ and rational functions $q_0,\ldots, q_{m-1} \in \Q(z)$ such that $\puip(P)A P^{-1}=A_{\rm comp}$ where
\begin{equation}\label{eq:ACompanion}
A_{\rm comp}:=\left(\begin{array}{ccccc} 0 & 1 & 0 & \cdots & 0
\\ \vdots & \ddots & \ddots & & \vdots 
\\ \vdots && \ddots& \ddots & 0
\\ 0 &\cdots & \cdots & 0 & 1
\\ q_0 & \cdots & \cdots & \cdots & q_{m-1}
\end{array}\right).
\end{equation}
\end{thm}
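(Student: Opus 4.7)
The plan is to reformulate the question as the existence of a \emph{cyclic vector} for the $\puip$-semilinear bijection
$$
\tau : \Q(z)^m \to \Q(z)^m,\qquad c \mapsto A^T\puip(c),
$$
and then to produce such a vector explicitly. Writing the unknown $P$ through its rows $p_1^T,\dots,p_m^T$, the equation $\puip(P)\,A\,P^{-1}=A_{\rm comp}$ becomes $p_{k+1}=\tau(p_k)$ for $k=1,\dots,m-1$ together with $\tau(p_m)=\sum_i q_{i-1}p_i$. Setting $c:=p_1$, the first set of relations fixes $p_k=\tau^{k-1}(c)$; the last relation then determines the $q_i$ uniquely as the coordinates of $\tau^m(c)$ in the basis $c,\tau(c),\dots,\tau^{m-1}(c)$, provided $P$ is invertible. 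Everything thus reduces to producing $c\in\Q(z)^m$ such that
$$
W(c):=\det\bigl(c\mid\tau(c)\mid\cdots\mid\tau^{m-1}(c)\bigr)\neq 0.
$$

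I would try the explicit family $c = \sum_{j=1}^m z^{a_j}\,e_j$ with $a_j:=p^{(j-1)m}$. Since $\puip^k(z^{a_j})=z^{a_jp^k}$ and since $\puip^k$ fixes the constant entries of each $e_j$, the $\puip$-semilinearity of $\tau$ gives $\tau^k(c)=\sum_j z^{a_jp^k}\tau^k(e_j)$, and multilinearity of the determinant then yields
$$
W(c) \;=\; \sum_{(j_0,\dots,j_{m-1})\in\{1,\dots,m\}^m} z^{\sum_{k=0}^{m-1} a_{j_k}p^k}\;\det\bigl(\tau^0(e_{j_0})\mid\tau^1(e_{j_1})\mid\cdots\mid\tau^{m-1}(e_{j_{m-1}})\bigr).
$$
With the chosen $a_j$, the integers $(j-1)m+k$ for $(j,k)\in\{1,\dots,m\}\times\{0,\dots,m-1\}$ are pairwise distinct, so each tuple $(j_0,\dots,j_{m-1})$ gives rise to a distinct exponent $\sum_k p^{(j_k-1)m+k}$, which one reads off from its base-$p$ expansion. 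The monomials appearing in $W(c)$ are therefore pairwise distinct, and $W(c)$ vanishes only if every coefficient $\det(\tau^0(e_{j_0})\mid\cdots\mid\tau^{m-1}(e_{j_{m-1}}))$ does.

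The crux is then the following claim: some tuple $(j_0,\dots,j_{m-1})\in\{1,\dots,m\}^m$ makes the vectors $\tau^0(e_{j_0}),\dots,\tau^{m-1}(e_{j_{m-1}})$ $\Q(z)$-linearly independent. I would build such a tuple greedily: each $\tau^k$ being a $\puip^k$-semilinear bijection of $\Q(z)^m$, it sends the standard basis to a $\Q(z)$-basis $\{\tau^k(e_1),\dots,\tau^k(e_m)\}$ of $\Q(z)^m$. Pick any $e_{j_0}\neq 0$; having produced $k<m$ independent vectors, their $k$-dimensional span cannot contain the whole basis $\{\tau^k(e_j)\}_j$, so some $\tau^k(e_{j_k})$ lies outside and extends the family.

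The argument above is already constructive and provides the algorithmic counterpart: one computes $\tau(c),\dots,\tau^m(c)$ by iterated applications of $\puip$ and left-multiplication by $A^T$, checks $W(c)\neq 0$ by a single determinant computation, and reads off $q_0,\dots,q_{m-1}$ by solving an $m\times m$ linear system over $\Q(z)$. The main technical point, which I expect to be more delicate than the greedy argument itself, is to control the degrees of the entries of $P$ and of the $q_i$ in terms of the initial data $A$ and $p$; the rapidly growing exponents $a_j=p^{(j-1)m}$ make the combinatorial part transparent at the cost of possibly loose degree bounds, which will likely need some refinement for the subsequent complexity analysis.
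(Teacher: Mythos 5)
Your reduction of the theorem to finding $c\in\Q(z)^m$ with $W(c)=\det\bigl(c\mid\tau(c)\mid\cdots\mid\tau^{m-1}(c)\bigr)\neq 0$ is correct, and the greedy construction of a tuple $(j_0,\dots,j_{m-1})$ making $\tau^0(\e_{j_0}),\dots,\tau^{m-1}(\e_{j_{m-1}})$ independent works (although $\tau$ is \emph{not} a bijection of $\Q(z)^m$: $\puip$ maps $\Q(z)$ onto the proper subfield $\Q(z^p)$; what you actually need, and what is true, is that $\tau^k(\e_1),\dots,\tau^k(\e_m)$ are the columns of the invertible matrix $A^T\puip(A^T)\cdots\puip^{k-1}(A^T)$, hence $\Q(z)$-linearly independent). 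The genuine gap is in the non-cancellation step. The coefficients $D_T:=\det\bigl(\tau^0(\e_{j_0})\mid\cdots\mid\tau^{m-1}(\e_{j_{m-1}})\bigr)$ in your expansion of $W(c)$ are \emph{rational functions of $z$}, not constants, so the fact that the exponents $E_T=\sum_k a_{j_k}p^k$ are pairwise distinct does not imply that $W(c)=0$ forces all $D_T=0$: the supports of the various terms $z^{E_T}D_T(z)$ can overlap and cancel. With your fixed choice $a_j=p^{(j-1)m}$, the gaps between distinct exponents $E_T$ are bounded in terms of $p$ and $m$ only (for $m=2$, $p=2$ they can be as small as $3$), whereas the degrees of the $D_T$ grow linearly with $\deg(A)$, which is arbitrary; so for systems of large degree nothing in your argument excludes cancellation between different monomial blocks, and the key claim ``$W(c)$ vanishes only if every coefficient does'' is unjustified.

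The gap is repairable: take $a_j=R\,p^{(j-1)m}$ with $R$ exceeding an explicit bound of order $p^{m}\deg(A)$ on the degree spread of the $D_T$ (after clearing denominators); then distinct exponents differ by more than the length of any coefficient's support, the lowest-order term with $D_T\neq 0$ survives, and $W(c)\neq 0$ follows from your greedy lemma. Note that this is a genuinely different route from the paper, which avoids the issue altogether by evaluation and interpolation: one picks $z_0\in\Q^\star$, not a root of unity, with $A(z_0),\dots,A\bigl(z_0^{p^{m-2}}\bigr)$ invertible, and interpolates a polynomial row vector $\r$ of degree $m-1$ at the distinct points $z_0,z_0^p,\dots,z_0^{p^{m-1}}$ so that $P(z_0)=\mathrm{I}_m$, whence $\det P\neq 0$. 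Besides being cancellation-free, that construction keeps the entries of $P$ of degree $m-1$, which is what the complexity bound of Proposition \ref{prop:ComplexityCyclic} relies on; your exponents $R\,p^{(j-1)m}$, even after the repair, would lead to substantially worse degree and complexity estimates.
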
 

\begin{proof}
We adapt the proof of Birkhoff given in \cite[§1]{Bir30} and of Sauloy given in \cite[Annexe B.2]{Sau00}.  
In order to build such a matrix $P$, we build its rows $\r_1,\ldots,\r_m$. These rows must be linearly independent and must satisfy
\begin{equation}
\label{eq:rows_cyclicLemma}
    \puip\left(\r_i\right)A=\r_{i+1} \quad \text{for} \quad 1\leq i\leq m-1.
\end{equation}
Therefore, we are looking for a vector  $\r\in\Q(z)^m$ such that 
the vectors $\r_1:=\r$, $\r_{i+1}:=\puip\left(\r_i\right)A$, $1\leq i\leq m-1$ form a basis of $\Q(z)^m$. For this purpose, we choose $z_0\in\Q^\star$ not a root of unity
such that 
$A\left(z_0\right), \ldots, A\left(z_0^{p^{m-2}}\right)\in {\rm GL}_m\left(\Q\right)$ (such a $z_0$ exists because the matrix $A$ has finitely many singularities). Since $z_0,z_0^p,\ldots,z_0^{p^{m-1}}$ are distinct, 
we can choose, by a polynomial interpolation process, a vector $\r\in\Q[z]^m$ such that
\begin{equation}\label{eq:interpolation}
\left\lbrace\begin{array}{rcl}
   \r(z_0)&=&\e_1 \\
    \r(z_0^p)&=&\e_2 A(z_0)^{-1} \\
    &\vdots& \\
    \r(z_0^{p^{m-1}})&=&\e_m A(z_0)^{-1}\ldots A\left(z_0^{p^{m-2}}\right)^{-1}
\end{array}\right.
\end{equation}
where $\e_1,\ldots,\e_m$ is the canonical basis of $\Q^m$. 
Write $\r_1:=\r$ and define recursively $\r_{i+1}:=\puip\left(\r_i\right)A$, $1\leq i \leq m-1$. By construction, 
$$\r_i(z_0)=\e_i\,, $$
and the matrix $P$ whose rows are the vectors $\r_1,\ldots, \r_m$ satisfies $P(z_0)={\rm I}_m$. Thus, $P\in {\rm GL}_m\left(\Q(z)\right).$ Write
$$
(q_0,\ldots,q_{m-1}):=\puip(\r_m)AP^{-1}\, .
$$
Then $A_{\rm comp} = \puip(P)A P^{-1}$ is a companion matrix of the form \eqref{eq:ACompanion}.
\end{proof}

\begin{rem} If one chooses $\e_1,\ldots,\e_m$ to be any basis of $\Q^m$, instead of the canonical basis, a different rational interpolation, and another $z_0$ in the proof of Theorem \ref{thm:CyclicVector}, one would obtain a different matrix $P$ and different rational functions $q_0,\ldots,q_{m-1}$. The different companion matrices obtained after this process, with these different choices, are $\Q(z)$-equivalent. Actually, any companion matrix system equivalent to \eqref{eq:Mahler_at_0} can be obtained by this process in the following way. Indeed, let us assume that $P \in {\rm GL}_m(\Q(z))$ is such that $\puip(P)AP^{-1}$ is a companion matrix. Let $z_0$ be chosen as in the proof of Theorem \ref{thm:CyclicVector} with the additional assumption that $P$ is well defined and non-singular at $z_0$. We now let $\e_1,\ldots,\e_m$ denote the rows of $P(z_0)$, instead of the canonical basis. Let $\r$ denote the first row of $P$. Since $\puip(P)AP^{-1}$ is a companion matrix, $\r$ satisfies the interpolation conditions \eqref{eq:interpolation}. Then, let $\r_1,\ldots,\r_m$ be defined from $\r$ and $A$ as in the proof of Theorem \ref{thm:CyclicVector}. One easily checks that $\r_1,\ldots,\r_m$ are the rows of $P$.
\end{rem}

\subsection{Ramification index of vector solutions of Mahler systems}\label{sec:Ramification}

In this section, we study the ramification index of solutions of linear systems of the form \eqref{eq:linear_system}. Let $\mathcal D$ denote the set of integers $d \in \{1,\ldots,p^m-1\}$ such that $p$ and $d$ are relatively prime.

\begin{lem}\label{lem:ramification_step1}
Let $A\in {\rm GL}_m\left(\Q(z)\right)$. There exists an integer $d \in \mathcal D$ such that for any $\lambda \in \Q^\star$ and $\f \in \mathbf K^m$ satisfying 
\begin{equation}
    \label{eq:system_homogene_in_ramification}
\lambda\puip(\f) +\puip(\g) = A \f 
\end{equation}
with $\g\in \Q\left(\left(z^{1/d}\right)\right)^m$, we have $\f \in \Q\left(\left(z^{1/d}\right)\right)^m$. 
\end{lem}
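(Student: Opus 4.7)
The plan is to reduce the vector system to a scalar Mahler equation via the Cyclic Vector Lemma (Theorem~\ref{thm:CyclicVector}) and then bound the ramification of any Puiseux series solution of that scalar equation via its Newton polygon. Applying Theorem~\ref{thm:CyclicVector}, fix $P \in {\rm GL}_m(\Q(z))$ and $q_0,\ldots,q_{m-1} \in \Q(z)$ such that $\puip(P) A P^{-1} = A_{\rm comp}$. Setting $\g := P\f$, the equation $\lambda \puip(\f) = A\f$ becomes $\lambda \puip(\g) = A_{\rm comp}\g$, and reading off the companion structure, the first component $g_1$ of $\g$ is annihilated by
$$
L_\lambda := \lambda^m \puip^m - q_{m-1}\lambda^{m-1} \puip^{m-1} - \cdots - q_0 .
$$
Because $P \in {\rm GL}_m(\Q(z))$ has integer exponents in $z$, the ramification of $\f$ coincides with that of $g_1$, so it is enough to bound the ramification of any Puiseux series solution of $L_\lambda g = 0$.

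I would first observe that the minimal ramification $d_0$ of $\f$ is coprime to $p$: if $p \mid d_0$, then $\puip(\f)$ would have ramification at most $d_0/p$, and since $A^{-1}$ has integer exponents in $z$, so would $\f = \lambda A^{-1}\puip(\f)$, contradicting the minimality of $d_0$. Then I would analyse the Newton polygon of $L_\lambda$, that is, the lower convex hull of the points $(p^j, \val(q_j))_{0 \leq j \leq m}$ (with $q_m := -1$). The key point is that $\val(\lambda^j q_j) = \val(q_j)$ for every $j$ since $\lambda \in \Q^\star$, so this Newton polygon is independent of $\lambda$ and depends only on $A$. Any Puiseux series solution has leading exponent equal to the opposite of a slope of this polygon. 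If the polygon's vertices lie at abscissae $p^{j_0} < \cdots < p^{j_s}$ and $w_a := j_{a+1} - j_a$, then the slope between consecutive vertices has denominator dividing $p^{j_{a+1}} - p^{j_a} = p^{j_a}(p^{w_a} - 1)$, and after removing the $p$-part (ruled out by the coprimality of $d_0$ and $p$), it divides $p^{w_a} - 1$. Since $\sum_a w_a = j_s - j_0 \leq m$, setting
$$
d := \operatorname{lcm}_a(p^{w_a} - 1)
$$
yields the elementary bound $d \leq \prod_a(p^{w_a} - 1) < \prod_a p^{w_a} = p^{\sum_a w_a} \leq p^m$, so $d \leq p^m - 1$ and $\gcd(d,p) = 1$, that is, $d \in \mathcal{D}$.

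The main obstacle is promoting the bound from the denominator of the leading exponent to the denominator of the full Puiseux expansion of $g_1$; one must rule out the possibility that the recursion satisfied by the higher-order coefficients produces finer ramification. I would handle this by writing $g_1 = z^r \tilde g$ with $\tilde g \in \Q[[z^{1/d}]]$ and $r \in \frac{1}{d}\mathbb{Z}$, substituting into $L_\lambda g_1 = 0$, and observing that the resulting shifted equation has coefficients $\lambda^j q_j z^{(p^j-1) r}$ all lying in $\Q((z^{1/d}))$; one then argues by induction on the degree of the expansion that each successive coefficient of $\tilde g$ lies in $\Q$ and is attached to an exponent in $\frac{1}{d}\mathbb{Z}$, keeping the whole series in $\Q((z^{1/d}))$ as needed.
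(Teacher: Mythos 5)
Your overall route is the same as the paper's: reduce to a scalar equation by the Cyclic Vector Lemma, note that $\val(\lambda^j q_j)=\val(q_j)$ so the relevant Newton data are independent of $\lambda$, and observe that the minimal ramification index of $\f$ is prime to $p$; all of this is correct. At the point where you start the Newton-polygon analysis, the paper simply invokes \cite[Prop.~2.19]{CDDM18}, which states exactly that every solution in $\K$ of the scalar equation lies in $\Q\left(\left(z^{1/d}\right)\right)$ with $d$ read off from the slopes of the lower hull; your computation that such a $d$ satisfies $d\leq p^m-1$ and $\gcd(d,p)=1$, i.e. $d\in\mathcal D$, is fine. The gap is in your last paragraph, which is meant to replace that citation. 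First, it is circular as written: you set $g_1=z^r\tilde g$ with $\tilde g\in\Q[[z^{1/d}]]$, which is the conclusion to be proved. Second, and more seriously, the proposed ``induction on the degree of the expansion'' has no driving mechanism: the coefficients of a solution of a Mahler equation are not determined recursively from the earlier ones (the solution space can have dimension up to $m$, and new exponents may enter at any stage), and a monomial of $g_1$ whose exponent lies outside $\frac{1}{d}\Z$ never interacts, inside the equation $L_\lambda(g_1)=0$, with the monomials whose exponents lie in $\frac{1}{d}\Z$ --- applying $\puip$ multiplies exponents by $p$ and multiplying by the $q_j$ shifts them by integers, and both operations preserve membership in $\frac{1}{d}\Z$ as well as its complement (here $\gcd(d_0,p)=1$ is used, $d_0$ being the ramification of $g_1$). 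So nothing in your induction rules out such monomials; the Newton polygon only constrains the leading exponent.

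The standard way to close this, and in substance what the cited proposition does, is a splitting argument rather than a term-by-term induction: let $h$ be the sum of all monomials of $g_1$ whose exponents are not in $\frac{1}{d}\Z$. By the invariance just described, $L_\lambda(h)=0$ and $L_\lambda(g_1-h)=0$ separately. If $h\neq 0$, its leading exponent is minus a slope of the Newton polygon and has denominator prime to $p$ (it lies in $\frac{1}{d_0}\Z$), hence lies in $\frac{1}{d}\Z$, a contradiction; so $h=0$ and $g_1\in\Q\left(\left(z^{1/d}\right)\right)$. This is precisely the device the paper itself uses in the proof of Lemma \ref{lem:ramification_step2} for the inhomogeneous step. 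With this (or with a correctly stated dichotomy for the leading exponent of solutions of inhomogeneous equations, which a genuine exponent-by-exponent induction would require), your argument becomes a complete, self-contained version of the paper's proof; as it stands, the promotion from the leading exponent to the full expansion --- which is the actual content of the result being used --- is missing.
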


In the sequel, we let $\mathcal D_0 \subset \mathcal D$ denote the set of all such integers $d \in \mathcal D$.
Note that, if $d_1,d_2 \in \mathcal D_0$ then so does $\gcd(d_1,d_2)$. Indeed, if $d_0 := \gcd(d_1,d_2)$, then $\Q((z^{1/{d_0}}))=\Q((z^{1/{d_1}}))\cap\Q((z^{1/{d_2}}))$.
\begin{proof}
By Theorem \ref{thm:CyclicVector} there exists $P \in {\rm GL}_m(\Q(z))$ such that $A_{\rm comp} = \puip(P)A P^{-1}$ is a companion matrix, that is, a matrix of the form \eqref{eq:ACompanion}.
We let $\f,\g, \lambda$ be as in the lemma. We have
$$
\lambda\puip(P\f)  + \puip(P\g) = A_{\rm comp} P\f
$$
so, without loss of generality, we replace $P\f$ with $\f$ and $P\g$ with $\g$, which does not modify the ramification index, and we assume that $A = A_{\rm comp}$. Let $f_1,\ldots,f_m \in \mathbf K$ and $g_1,\ldots,g_m \in \Q\left(\left(z^{1/d}\right)\right)$ be the entries of $\f$ and $\g$ respectively. Since $A$ is a companion matrix, we infer from \eqref{eq:system_homogene_in_ramification} that, for every $i\in\lbrace 1,\ldots, m-1\rbrace$,
\begin{equation}
\label{eq:inhomog_in_ram}
f_{i+1} = \puip(g_i)+\lambda \puip(f_i)= \cdots = \lambda^i\puip^i\left(f_1\right) + \sum\limits_{j=1}^{i} \lambda^{j-1} \puip^{j}(g_{i-j+1}).
\end{equation}
To find an integer $d$ satisfying the conclusion of Lemma \ref{lem:ramification_step1} we first assume that $\g=\boldsymbol{0}$. 
Considering the last row of the system \eqref{eq:system_homogene_in_ramification}, it follows from \eqref{eq:inhomog_in_ram} that  $f_1$ is a solution of the Mahler equation
\begin{equation}\label{eq:eq_homogene_in_ramification}
q_0 y + \lambda q_1 \puip(y) + \cdots + \lambda^{m-1} q_{m-1}\puip^{m-1}(y) - \lambda^m \puip^m(y) = 0 .
\end{equation}
Let $\mathcal H \in \mathbb R^2$ denote the \emph{lower hull} of the set of pairs $\left(p^i,\val(\lambda^{i}q_i)\right)$, $0\leq i\leq m$, with $q_m:=1$. Let $d\in\mathbb{N}^\star$ be the least common multiple of the denominators of the slopes of $\mathcal H$ which are coprime with $p$. If there are no such denominators we write $d:=1$. Since $\val(\lambda^iq_i)=\val(q_i)$, this integer $d$ does not depend on $\lambda$. From \cite[Prop. 2.19]{CDDM18}, $d\in\mathcal{D}$ and any solution $y \in \mathbf K$ of \eqref{eq:eq_homogene_in_ramification} belongs to $\Q\left(\left(z^{1/d}\right)\right)$. Therefore, $f_1\in \Q((z^{1/d}))$.

We now prove that Lemma \ref{lem:ramification_step1} holds with this integer $d$. Let $\g\in \Q((z^{1/d}))^m$. From \eqref{eq:inhomog_in_ram}, we only have to prove that $f_1 \in \Q((z^{1/d}))$. Considering the last row of the system \eqref{eq:system_homogene_in_ramification}, it follows from \eqref{eq:inhomog_in_ram} that  
\begin{equation}\label{eq:eq_inhomogene_in_ramification}
q_0 f_1 + \lambda q_1 \puip(f_1) + \cdots + \lambda^{m-1} q_{m-1}\puip^{m-1}(f_1) - \lambda^m \puip^m(f_1) = g_0 ,
\end{equation}
where $g_0 \in \Q\left(\left(z^{1/d}\right)\right)$ is a $\Q(z)$-linear combination of the $\puip^j(g_i)$, $i,j \in \{1,\ldots, m\}$. First, we prove that $f_1\in {\bf k}$, where ${\bf k}:=\bigcup_{\ell\in \mathbb{N}}\Q\left(\left(z^{1/(dp^\ell)}\right)\right) \subset \K$. The function $f_1\in\mathbf K$ can be written as 
$$
f_1=h_0+h_1
$$
where $h_0\in{\bf k}$ and none of the monomials in the Puiseux expansion of $h_1$ belong to ${\bf k}$. Then, none of the monomials of the Puiseux expansion of $\puip^j(h_1)$, $j\in\lbrace 0,\ldots, m\rbrace$, belong to ${\bf k}$. Hence, since $g_0 \in \Q\left(\left(z^{1/d}\right)\right) \subset {\bf k}$, it follows from \eqref{eq:eq_inhomogene_in_ramification} that $h_1$ is a solution of \eqref{eq:eq_homogene_in_ramification}. From the first part of the proof, $h_1 \in \Q\left(\left(z^{1/d}\right)\right)$. Thus $h_1=0$ and $f_1 = h_0 \in{\bf k}$. Let $\ell_0$ be the smallest integer such that $f_1\in \Q\left(\left(z^{1/(dp^{\ell_0})}\right)\right)$. We assume by contradiction that $\ell_0>0$. From \eqref{eq:eq_inhomogene_in_ramification}, $f_1$ is a $\Q(z)$-linear combination of $g_0$ and the $\puip^j(f_1)$ for $j \in \{1,\ldots, m\}$, which are all elements of $\Q\left(\left(z^{1/(dp^{\ell_0-1})}\right)\right)$. Thus, $f_1\in \Q\left(\left(z^{1/(dp^{\ell_0-1})}\right)\right)$, which provides a contradiction. As a consequence, $\ell_0=0$ and $f_1 \in\Q\left(\left(z^{1/d}\right)\right) $ as wanted.
\end{proof}

\begin{coro}\label{coro:D0}
Assume that the Mahler system \eqref{eq:Mahler_at_0} is regular singular at $0$ and let $\Gtr \in {\rm GL}_m(\mathbf K)$ be such that $\puip(\Gtr)^{-1}A\Gtr$ is a constant matrix. Then $\Gtr$ belongs to 
$$ \bigcap_{d \in \mathcal D_0} {\rm GL}_m\left(\Q\left(\left(z^{1/d}\right)\right)\right)\,.$$
\end{coro}

\begin{proof}
We can assume that $\puip(\Gtr)^{-1}A\Gtr$ is a Jordan matrix. Thus \eqref{eq:Mahler_columns} holds. Let $d\in\mathcal{D}_0$. Using the notations of \eqref{eq:Mahler_columns}, we prove by induction on $i\in\lbrace 1, \ldots, m\rbrace$ that the columns $\gtr_{1},\ldots,\gtr_m$ of $\Gtr$ belong to $\Q\left(\left(z^{1/d}\right)\right)^m$. From \eqref{eq:Mahler_columns} we have
$$
\lambda_1 \puip(\gtr_1)=A\gtr_1\,.
$$
Thus, it follows from Lemma \ref{lem:ramification_step1} applied with $\lambda=\lambda_1$, $\f=\gtr_1$ and $\g=0$ that $\gtr_1\in\Q\left(\left(z^{1/d}\right)\right)^m$. Assume that $i\geq 2$ and that $\gtr_{i-1} \in \Q\left(\left(z^{1/d}\right)\right)^m$. Then, it follows from \eqref{eq:Mahler_columns} and Lemma \ref{lem:ramification_step1} applied with $\lambda=\lambda_i$, $\f = \gtr_i$ and $\g=\epsilon_i\gtr_{i-1}$ that $\gtr_{i} \in \Q\left(\left(z^{1/d}\right)\right)^m$.
\end{proof}

One could be tempted to work with the smallest integer $d \in \mathcal D_0$. However, while Algorithm \ref{Algo:dbis} below returns an integer $d \in \mathcal D_0$, there is no guaranty that this integer is minimal. Thus, in what follows, we shall work with any $d\in \mathcal D_0$.

\subsection{Valuation of vector solutions of Mahler systems}\label{sec:valuation}

In this subsection, we fix an integer $d \in \mathcal D_0$ and we consider
\begin{equation}
\label{eq:def_nu_d}
\nu_d:= \lceil d\val(A)/(p-1) \rceil.
\end{equation}
 We prove that the valuation of vector solutions of systems of the form \eqref{eq:linear_system} is at least $\nu_d/d$ assuming that the valuation of $\g$ is at least $\nu_d/d$.

\begin{lem}\label{lem:valuation}
Let $\lambda \in \Q^\star$ and let $\g \in \Q((z^{1/d}))^m$ be a vector of Puiseux series whose valuation is at least $\nu_d/d$. The valuation at $0$ of a solution $\f \in \Q((z^{1/d}))^m$ of
\begin{equation}\label{eq:MahlerLambdaInhomogene_1}
\lambda \puip(\f) + \puip(\g) = A\f\, ,
\end{equation}
is at least $\nu_d/d$.
\end{lem}

\begin{proof}
From \eqref{eq:MahlerLambdaInhomogene_1}
we have
$$
p\val(\f)\geq \min\left(\val(A)+\val(\f), p\val(\g)\right):=n_0.
$$Two cases occur:
\begin{itemize}
\item If $n_0 = \val(A)+\val(\f)$, then 
$p\val(\f) \geq \val(A) + \val(\f)$ and
$$
\val(\f) \geq \frac{\val(A)}{p-1}\,.
$$
Since $d\val(\f)$ is an integer we have $d\val(\f)\geq \nu_d$, as wanted.
\item If $n_0=p\val(\g)$ then $\val(\f)\geq \val(\g)\geq \nu_d/d$, which concludes.
\end{itemize}
\end{proof}

We then have the following corollary.
\begin{coro}
\label{coro:valuation} Let $d \in \mathcal D_0$.
Suppose that the system \eqref{eq:Mahler_at_0} is regular singular at $0$ and let $\Gtr \in {\rm GL}_m\left(\Q\left(\left(z^{1/d}\right)\right)\right)$ be such that $\puip(\Gtr)^{-1}A\Gtr$ is a constant matrix. Then, $\val(\Gtr)\geq~\nu_d/d$.
\end{coro}
\begin{proof}
Let $d \in \mathcal D_0$. From Corollary \ref{coro:D0}, $\Gtr \in {\rm GL}_m(\Q\left(\left(z^{1/d}\right)\right)$. Then, arguing as in the proof of Corollary \ref{coro:D0}, we prove by induction, using \eqref{eq:Mahler_columns} and Lemma \ref{lem:valuation}, that the valuations of the columns of $\Gtr$ are at least $\nu_d/d$.
\end{proof}


\subsection{Coefficients of vector solutions of Mahler systems}
\label{subsec:solutions}
Let $d \in \mathcal D_0$ be an integer and let $\nu_d$ be defined by \eqref{eq:def_nu_d}. Instead of studying solutions in $\Q((z^{1/d}))^m$ of \eqref{eq:MahlerLambdaInhomogene_1}, we use the operator $\puid : z \mapsto z^d$ to work in the field $\Q((z))$ of Laurent series. To compute the coefficients of such vectors of solutions, we need to \emph{inverse} the Mahler system. We write $\Ainv_d := \phi_d(A)^{-1}$ and we let
$$
\Ainv_d := \sum_{n \geq d\val\left(A^{-1}\right)} \Ainv_{d,n} z^n
$$
denote the Laurent expansion of $\Ainv_d$. 
Let $\f \in \Q((z))^m$ be a solution of the linear system
$$
\lambda \f = \Ainv_d\puip(\f)\, ,
$$
for some $\lambda \in \Q^\star$. By Lemma \ref{lem:valuation}, the valuation of $\f$ is at least $\nu_d$. We write $\f=\sum_{n\geq \nu_d} \f_n z^n$, $\f_n \in \Q^m$, and $\f_n = 0$ if $n < \nu_d$. Then, for every $n \in \Z$, we have
$$
\lambda \f_n = \sum_{(k,\ell)\,:\, k +p\ell=n}\Ainv_{d,k}\f_\ell\, .
$$
Write
\begin{equation}\label{eq:mud_nud}
\mu_d := \lceil -d\val\left(A^{-1}\right)/(p-1)\rceil\,.
\end{equation}
Since $AA^{-1} = {\rm I}_m$, we have $\val(A)+\val(A^{-1}) \leq 0$ so $\nu_d\leq \mu_d$.
The vectors $\f_\ell$ which are taken into account on the right-hand side of the equation have an index $\ell\leq \frac{n-d\val(A^{-1})}{p}$. Then, if $n > \mu_d$, we have $\frac{n-d\val(A^{-1})}{p}<n$. Thus, $\f_n$ is uniquely determined by the vectors $\f_\ell$, $\ell < n$. Moreover, the coefficients of the vectors $\f_\ell$, $\nu_d \leq \ell \leq \mu_d$, are solutions of some linear equations depending on $\lambda$ and $\Ainv_d$. Thus, the problem of determining $\f$ can be transformed into a finite dimensional problem. To capture this we introduce the following map:
\begin{eqnarray*}
\pi_d:\quad  \Q((z))^m & \to & \Q^{m(\mu_d-\nu_d+1)}
\\
     \sum_{n\in\Z} \g_nz^n & \mapsto&  \begin{pmatrix}  \g_{\nu_d} \\ \vdots \\ \g_{\mu_d}\end{pmatrix}\,.
\end{eqnarray*}
Then we define two block matrices
\begin{align*}
M_d&:=(B_{d,i-pj})_{\nu_d \leq i,j\leq \mu_d},\ \text{ and }\ \\ \nonumber N_d&:=(B_{d,i-pj})_{\begin{subarray}{l}
d\val\left(A^{-1}\right)+p\nu_d \leq i \leq \nu_d-1,\ \nu_d \leq j \leq \mu_d\,. \end{subarray}}
\end{align*}
We proceed to check that the map $\pi_d$ and these matrices are well defined.
Since $\nu_d \leq \mu_d$, $\pi_d$ and the matrix $M_d$ are well defined. Now, $\nu_d < \mu_d$ if and only if $\nu_d<~-d\val\left(A^{-1}\right)/(p-1)$. In that case, $d\val(A^{-1}) + p\nu_d \leq \nu_d-1$ and the matrix $N_d$ is well defined. If $\nu_d=\mu_d$, then $d\val(A^{-1}) + p\nu_d > \nu_d-1$ and the matrix $N_d$ shall be considered as a matrix with no rows.

\begin{lem}\label{lem:solutioninhomogène}
Let $\gamma \in \Q^\star$ and let $\h \in \Q((z))^m$ be a vector of Laurent series whose valuation is at least $\nu_d$. If $\f \in \Q((z))^m$ is a solution of
\begin{equation}\label{eq:MahlerLambdaInhomogene}
\gamma \f + \h = \Ainv_d \puip(\f)\, ,
\end{equation}
then,
$$
N_d\pi_d(\f)= 0,\ \text{ and } \ M_d\pi_d(\f) = \gamma \pi_d(\f)+\pi_d(\h)\, .
$$
\end{lem}
Note that, if $\mu_d=\nu_d$ and $N_d$ is a matrix with no rows, the condition $N_d\pi_d(\f)=0$ is automatically satisfied for every $\f$. We shall consider that, in that case, the right-kernel of $N_d$ is the whole space $\Q^{m\left(\mu_d-\nu_d+1\right)}$.

\begin{proof}
From \eqref{eq:MahlerLambdaInhomogene}, we have $\puip\left(\f\right)=\puid(A)\left(\gamma\f+\h\right)$. It follows that $p\val\left(\f\right)\geq d\val\left(A\right)+\min\left(\val\left(\f\right),\val\left(\h\right)\right)$. Arguing as in the proof of Lemma \ref{lem:valuation}, one checks that the valuation of $\f$ is at least $\nu_d$. Write $$
\f = \sum_{n\geq \nu_d} \f_n z^n,\quad \h = \sum_{n\geq \nu_d} \h_n z^n, \qquad \f_n, \h_n\in \Q^m,
$$
and $\f_n,\h_n:=0$, when $n < \nu_d$. The series $\f$ is a solution of \eqref{eq:MahlerLambdaInhomogene} if and only if
\begin{equation}\label{eq:recurrenceFinhomogene}
\forall n \in \Z,\quad \gamma \f_n +\h_n= \sum_{(k,\ell)\, : \, k+p\ell=n} \Ainv_{d,k} \f_\ell\, .
\end{equation}
When $n < \nu_d$, the left-hand side is $0$. If $\ell > \mu_d$, then $n-p\ell < \val(\Ainv_d)=d\val(A^{-1})$. Thus, we have
$$
\forall n < \nu_d,\quad \sum_{\ell = \nu_d}^{\mu_d} \Ainv_{d,n-p\ell} \f_\ell = 0\, . 
$$
In particular, $N_d\pi_d\left(\f\right)=0$.
Now, looking at equation \eqref{eq:recurrenceFinhomogene} for all $n$, $\nu_d\leq n\leq \mu_d$, we have, similarly,
$$
M_d\pi_d\left(\f\right) = \pi_d(\gamma\f+\h)= \gamma \pi_d\left(\f\right)+\pi_d\left(\h\right) \,,
$$
as wanted.
\end{proof}

\section{A characterisation of regular singular Mahler systems at \texorpdfstring{$0$}{0}}
\label{sec:preuvetheorem}

In Section \ref{sec:Ramification_valuation}, we studied the vector solutions of Mahler systems of the form \eqref{eq:linear_system} in $\mathbf K^m$. We computed the possible ramification indexes, a lower bound for the valuation of such solutions, and proved that their coefficients must satisfy certain linear relations over $\Q$. Let $d \in \mathcal D_0$ be an integer. The conclusion of Lemma \ref{lem:solutioninhomogène} invites us to define the following vector spaces:
$$
\intersd^{+} := \bigcap_{n \in \N} M_d^n \ker(N_d)\,,\quad \intersd^{-}:= \bigcap_{n \in \N} \ker (N_dM_d^n)
$$
and 
$$
\intersd = \intersd^{+} \bigcap \intersd^{-}\, .
$$
The main result of this paper states as follows.

\begin{thm}\label{thm:Espace}  
The three following propositions are equivalent:
\begin{enumerate}
    \item\label{1} The Mahler system \eqref{eq:Mahler_at_0} is regular singular at $0$,
    \item\label{2} $\dim \intersd \geq m$ for some integer $d \in \mathcal D_0$, 
    \item\label{item3} $\dim \intersd = m$ for every integer $d \in \mathcal D_0$.
\end{enumerate}
 In that case, the system is $\Q((z^{1/d}))$-equivalent to a constant system for every $d \in \mathcal D_0$.
\end{thm}

Thus, to prove that some Mahler system is regular singular at $0$, one only needs to check point (\ref{2}). Before proving Theorem \ref{thm:Espace} we need two lemmas. The first one is about the linear independence of vector solutions of linear Mahler systems. The second one gives an implicit characterisation of the vector space $\intersd$.

 \begin{lem}
\label{lem:indeplineairecolonnes}
Let $T$ be a matrix with entries in $\Q((z^{}))$ and $D$ be a constant invertible matrix such that
\begin{equation}
\label{eq:TD=B puip(T)}
T D = \Ainv_d \puip(T)\ .    
\end{equation}
If the columns of $T$ are linearly dependent over $\Q((z^{}))$, then they are linearly dependent over $\Q$.
\end{lem}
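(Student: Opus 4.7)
The plan is to find a nonzero constant vector in the kernel $V := \ker T \subset \Q((z))^k$, which by definition will yield a $\Q$-linear relation among the columns of $T$. As a preliminary reduction, I would dispose of the case where $D$ is singular: any nonzero $v \in \ker D \subset \Q^k$ already works, because $v \in \Q^k$ gives $\puip(Tv) = \puip(T)v$, so that $0 = TDv = \Ainv\puip(T)v = \Ainv\puip(Tv)$ forces $Tv = 0$ using the invertibility of $\Ainv$ and the injectivity of $\puip$. So I may assume $D \in {\rm GL}_k(\Q)$. The structural observation driving the whole argument is that the $\puip$-semilinear map $\Phi\colon \boldsymbol{\alpha} \mapsto D\puip(\boldsymbol{\alpha})$ preserves $V$: if $T\boldsymbol{\alpha}=0$ then $T\Phi(\boldsymbol{\alpha}) = TD\puip(\boldsymbol{\alpha}) = \Ainv\puip(T\boldsymbol{\alpha}) = 0$.

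The strategy is then to extract a $\Phi$-eigenvector from $V$ by solving a rank-one Mahler equation, and to show that any such eigenvector is automatically constant. I would set $V^+ := V \cap \Q[[z]]^k$. Since $\Q[[z]]$ is a discrete valuation ring and $V^+$ is finitely generated and torsion-free, it is free of rank $r := \dim_{\Q((z))}V > 0$, and $\Phi$ stabilizes $V^+$ because $\puip(\Q[[z]]) \subset \Q[[z]]$. Picking a $\Q[[z]]$-basis $\boldsymbol{\beta}^{(1)},\dots,\boldsymbol{\beta}^{(r)}$ assembled into a matrix $B \in \mathcal{M}_{k,r}(\Q[[z]])$ and defining $M \in \mathcal{M}_r(\Q[[z]])$ by $\Phi(\boldsymbol{\beta}^{(i)}) = \sum_j M_{ji}\boldsymbol{\beta}^{(j)}$, the action of $\Phi$ reads $c \mapsto M\puip(c)$ in these coordinates. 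Using the identity $V^+ \cap z\Q[[z]]^k = zV^+$ (immediate from the definition of $V^+$), the columns of $B(0)$ are $\Q$-linearly independent; their span $W_0 \subset \Q^k$ has dimension $r$, is $D$-stable, and $M(0)$ represents $D|_{W_0}$ in the basis $\boldsymbol{\beta}^{(i)}(0)$. In particular $M(0)$ is invertible, and since $\Q$ is algebraically closed it admits an eigenvector $c_0 \in \Q^r$ with nonzero eigenvalue $\lambda \in \Q^\star$.

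Next I would construct by term-by-term recursion a formal power series $c(z) = \sum_{n\geq 0} c_n z^n \in \Q[[z]]^r$ with $c(0) = c_0$ and $M\puip(c) = \lambda c$: comparing coefficients of $z^n$ on both sides yields $\lambda c_n = \sum_{0 \leq m \leq n/p} M_{n-pm}c_m$, which for $n=0$ is precisely the eigenvalue equation, and for $n \geq 1$ determines $c_n$ explicitly in terms of $c_0,\dots,c_{\lfloor n/p\rfloor}$ (since $\lambda \neq 0$). Setting $\boldsymbol{\alpha}' := Bc \in V^+$, one checks $\Phi(\boldsymbol{\alpha}') = BM\puip(c) = \lambda Bc = \lambda \boldsymbol{\alpha}'$. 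Comparing coefficients in the identity $D\puip(\boldsymbol{\alpha}') = \lambda \boldsymbol{\alpha}'$ then gives $\lambda\boldsymbol{\alpha}'_n = 0$ whenever $p \nmid n$ and $\lambda\boldsymbol{\alpha}'_n = D\boldsymbol{\alpha}'_{n/p}$ whenever $p \mid n$; a straightforward induction on $n$ forces $\boldsymbol{\alpha}'_n = 0$ for every $n \geq 1$. Hence $\boldsymbol{\alpha}' = \boldsymbol{\alpha}'_0 = B(0)c_0$ is a nonzero element of $V \cap \Q^k$, giving the desired $\Q$-linear relation.

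The main obstacle is the construction of the $\Phi$-eigenvector $\boldsymbol{\alpha}'$ in the second step: one has to find the correct finite-dimensional object on which a genuine $\Q$-eigenvalue problem makes sense, and the translation from an eigenvector of $D$ on $W_0$ to a $\Phi$-eigenvector in $V$ relies crucially on the freeness of $V^+$ over $\Q[[z]]$ together with the invertibility of $M(0)$, which is what guarantees both the existence of $\lambda \in \Q^\star$ and the solvability of the recursion. Once the eigenvector with nonzero eigenvalue is secured, the collapse to a constant vector is automatic and reflects the identity $\Q((z))^{\puip} = \Q$.
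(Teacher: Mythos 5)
Your argument is correct, but it takes a genuinely different route from the paper's proof. The paper assumes $D$ upper triangular, picks a shortest $\Q((z))$-linear relation $\g$ among the columns of $T$ (normalised so that its last nonzero coordinate equals $1$), deduces from the functional equation, the triangular shape and the minimality of the relation that $D\puip(\g)=\eta\g$ for some eigenvalue $\eta$ of $D$, and then concludes $\g\in\Q^m$ by citing Nishioka's theorem. You instead work with the whole kernel $V=\ker T$, which is stable under the $\puip$-semilinear map $\boldsymbol\alpha\mapsto D\puip(\boldsymbol\alpha)$, pass to the saturated lattice $V^+=V\cap\Q[[z]]^k$, and reduce modulo $z$: the invertibility of $M(0)$ turns the problem into a genuine eigenvalue problem over $\Q$, your recursion lifts an eigenvector of $M(0)$ to a semilinear eigenvector $\boldsymbol\alpha'\in V^+$ with nonzero eigenvalue, and the final coefficient comparison proves directly the constancy statement that the paper delegates to the reference (essentially the case $D\puip(\g)=\lambda\g$ with $\lambda\neq 0$ and power series entries). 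What your approach buys is self-containedness and an explicit treatment of the degenerate situation: singular $D$ is disposed of up front, and the nonzero eigenvalue is guaranteed by construction rather than inherited from the triangular form; what the paper's approach buys is brevity, the triangularisation-plus-minimality trick replacing your lattice and reduction-mod-$z$ machinery in a few lines. The individual steps you rely on (freeness and rank of $V^+$, $\Q$-independence of the columns of $B(0)$ via $V^+\cap z\Q[[z]]^k=zV^+$, solvability of the recursion since $\lambda\neq 0$ and $p\geq 2$, and the induction forcing $\boldsymbol\alpha'_n=0$ for $n\geq 1$) all check out.
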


\begin{proof}
Let $P$ be a constant invertible matrix such that the matrix $PDP^{-1}$ is upper triangular. Then, $(TP^{-1})(PDP^{-1}) = \Ainv_d \puip(TP^{-1})$. Thus, without loss of generality, we replace $T$ with $TP^{-1}$ and we assume that $D$ is upper triangular. We can also assume that the first column of $T$ is nonzero, otherwise the conclusion of the lemma is immediate. Let $a$ be the least integer such that the first $a$ columns of the matrix $T$ are linearly dependent over $\Q((z^{}))$. By assumption, $a \geq 2$. There exists a column vector $\g:=(g_1,\ldots,g_{a-1},1,0,\ldots,0)^\top \in \Q((z))^{m}$, $m\in\N$, such that 
\begin{equation}\label{eq:GammaG=0}
T\g=0\, .
\end{equation}
Mutliplying \eqref{eq:TD=B puip(T)} by $\puip(\g)$ one obtains
\begin{equation}\label{eq:GammaCpuip(G)=0}
T D \puip(\g) = \Ainv_d \puip(T)\puip(\g)=\Ainv_d \puip\left(T\g\right) = 0.
\end{equation}
Since $D$ is upper triangular, the $a$th coordinate of $D\puip(\g)$ is some eigenvalue $\eta \in \Q^\star$ of $D$ and the $m-a$ last coordinates of $D\puip(\g)$ are zero. By minimality of $a$, we infer  from \eqref{eq:GammaG=0} and \eqref{eq:GammaCpuip(G)=0} that
$$
D \puip(\g) = \eta \g\, .
$$
From \cite[Thm. 3.1]{Ni97}, $\g \in \Q^{m}$ and Equation \eqref{eq:GammaG=0} provides a linear relation over $\Q$ between the columns of $T$, as wanted. 
\end{proof}

\begin{lem}\label{lem:X_max}
Let $d \in \mathcal D_0$. The vector space $\intersd$ is the largest subspace of $\ker(N_d)$ on which $M_d$ acts as an isomorphism.
\end{lem}
\begin{proof}
By definition, $\intersd \subset \ker(N_d)$ and $\intersd$ is invariant under the action of $M_d$. Since $\intersd$ is finite dimensional, to prove that $M_d$ acts as an isomorphism on $\intersd$ we only have to prove that $\ker(M_d) \cap \intersd = \{0\}$. Let $\x \in \ker(M_d) \cap \intersd $, and let $s\times s$ denote the size of $M_d$. Then $\ker(M_d^s)=\ker(M_d^{s+1})$. Since $\x \in \intersd \subset M_d^s\ker(N_d)$, there exists  $\y \in \ker(N_d)$ such that $\x = M_d^s\y$. Then, $M_d^{s+1}\y=M_d\x=0$. Thus, $\y \in \ker(M_d^{s+1})=\ker(M_d^s)$ and $\x=M_d^s\y=0$. It follows that $\ker(M_d) \cap \intersd =\{ 0\}$.

Now, let $\mathfrak V \subset \ker(N_d)$ be a vector space on which $M_d$ acts as an isomorphism. On the one hand, $M_d^n\mathfrak V = \mathfrak V \subset \ker(N_d)$ for every $n \in \N$. Thus, $\mathfrak V \subset \ker(N_dM_d^n)$ for every $n \in \N$. On the other hand, $\mathfrak V = M_d^n\mathfrak V \subset M_d^n\ker(N_d) $ for every $n \in \N$. Therefore, $\mathfrak V \subset \intersd$.
\end{proof}

We are now able to prove Theorem \ref{thm:Espace}.

\begin{proof}[Proof of Theorem \ref{thm:Espace}]
Consider the following proposition :
\medskip
\begin{enumerate}
 \setcounter{enumi}{3}
    \item\label{4} $\dim \intersd\geq m$ for every $d \in \mathcal D_0$.
\end{enumerate}
\medskip
We prove that (\ref{1}) implies (\ref{4}), that (\ref{4}) implies (\ref{item3}) and that (\ref{2}) implies (\ref{1}). Since (\ref{item3}) trivially implies (\ref{2}), this shall prove Theorem \ref{thm:Espace}.

Let $d\in\mathcal{D}_0$ and suppose that the system is regular singular at $0$. Then, it follows from Corollaries \ref{coro:D0} and \ref{coro:valuation} that there exists  $\Gtr \in {\rm GL}_m\left(\Q((z^{1/d}))\right)$ such that $\Lambda:=\puip(\Gtr)^{-1}A \Gtr$ is a constant matrix and $\val(\Gtr)\geq \nu_d/d$.
Write $\T:=\puid(\Gtr)$ and recall that $\Ainv_d:=\puid(A^{-1})$. We have
\begin{equation} \label{eq:MahlerGamma_puid}
\T \Const^{-1} = \Ainv_d\puip\left(\T\right)\, .
\end{equation}
We can assume that $\Const^{-1}$ is a Jordan matrix that is
$$
 \Const^{-1}:= \begin{pmatrix}
J_{s_1}(\gamma_1) &  &  & \\
 & J_{s_2}(\gamma_2) & &  \\
 & & \ddots &   \\
 & &  & J_{s_r}(\gamma_r)
\end{pmatrix}   
$$
where $\gamma_1,\ldots,\gamma_r$ are nonzero algebraic numbers and $J_{s_i}(\gamma_i)$ is the Jordan block of size $s_i$ associated with the eigenvalue $\gamma_i$.
Let
$$
\t_{1,1},\ldots,\t_{1,s_1},\t_{2,1},\ldots,\t_{2,s_2},\ldots,\t_{r,1},\ldots,\t_{r,s_r}
$$
denote the columns of $\T$ indexed according to the Jordan block decomposition of $\Const^{-1}$. We infer from \eqref{eq:MahlerGamma_puid} that the columns of $\T$ satisfy
\begin{eqnarray}
    \label{eq:colomnesGamma}
   \gamma_i \t_{i,1}&=&\Ainv_d\puip(\t_{i,1})\quad 1\leq i \leq r
    \\ \nonumber \gamma_i\t_{i,j}+\t_{i,j-1}&=&\Ainv_d\puip(\t_{i,j}) \quad 1\leq i \leq r,\, 2\leq j\leq s_i.
\end{eqnarray}
It follows from \eqref{eq:colomnesGamma} and Lemma \ref{lem:solutioninhomogène} applied with $\g = \t_{i,j-1}$ that
$\pi_d(\t_{i,j}) \in \ker(N_d)$ and that
\begin{equation}\label{eq:Md_colonnes}
M_d\pi_d(\t_{i,j}) = \gamma_i\pi_d(\t_{i,j}) + \pi_d(\t_{i,j-1}),
\end{equation}
for every $i,j$, $1\leq i \leq r$, $1\leq j\leq s_i$, 
where $\t_{i,0}=\boldsymbol 0$ for every $i$. Let $\V$ denote the vector space spanned by the vectors $\pi_d(\t_{i,j})$,  $1\leq i \leq r$, $1\leq j\leq s_i$. Then, $\V \subset \ker(N_d)$. It immediately follows from \eqref{eq:Md_colonnes} that $\V$ is invariant under the left multiplication by $M_d$.  We prove that the $m$ vectors $\pi_d(\t_{i,j})$ are linearly independent over $\Q$. By contradiction, assume that they are not linearly independent. Let $k$ be the least integer such that the image by $\pi_d$ of the first $k$ columns of $\T$ are linearly dependent. There exists a non-zero vector $\boldsymbol{\lambda}:= \left(\lambda_1, \ldots,\lambda_{k-1}, 1, 0, \ldots, 0\right)^\top\in\Q^m$ such that $n_0:=\val(\T\boldsymbol{\lambda})>\mu_d$. Multiplying \eqref{eq:MahlerGamma_puid}  with $\boldsymbol{\lambda}$ and looking at the valuations on both sides gives
$$
\val\left(\T\Const^{-1}\boldsymbol{\lambda}\right) \geq d\val\left(A^{-1}\right) +p n_0 \geq \frac{-d\val\left(A^{-1}\right)}{p-1}+p(n_0-\mu_d)\,.
$$
Therefore, we have
\begin{equation}
\label{eq:valuationcontrad}
    \val\left(\T\Const^{-1}\boldsymbol{\lambda}\right)\geq \mu_d + p(n_0-\mu_d)\, .
\end{equation}
Since $\Const^{-1}$ is upper triangular (because we assumed that it is a Jordan matrix), the vector $\Const^{-1}\boldsymbol{\lambda}$ is also of the form $\left(\eta_1, \ldots, \eta_k, 0, \ldots, 0\right)^\top$ with $\eta_k\neq 0$. By \eqref{eq:valuationcontrad}, we have $ \val\left(\T\Const^{-1}\boldsymbol{\lambda}\right)>\mu_d$. Thus,  $\pi_d\left(\T\Const^{-1}\boldsymbol{\lambda}\right)=0$. Then, by minimality of $k$, $\Const^{-1}\boldsymbol{\lambda}= \eta_k \boldsymbol{\lambda}$.
Thus, $\val\left(\T\Const^{-1}\boldsymbol{\lambda}\right) = \val\left(\eta_k\T\boldsymbol{\lambda}\right) = n_0$. Then, from the inequality \eqref{eq:valuationcontrad}, we have $\mu_d\geq n_0$, which is a contradiction. Thus, the $m$ vectors $\pi_d(\t_{i,j})$ are linearly independent and they form a basis of $\V$. Now, from \eqref{eq:Md_colonnes}, the representation of the action of $M_d$ on $\V$ in the basis  $(\pi_d(\t_{i,j}))_{1\leq i \leq r,\,1\leq j \leq s_i}$ is just the matrix $\Const^{-1}$. Since it is nonsingular, $M_d$ acts as an isomorphism on $\V$.
Hence, by Lemma \ref{lem:X_max}, $\V \subset \intersd$ and
$$
\dim \intersd \geq \dim \V = m\, .
$$
Thus (\ref{1}) implies (\ref{4}).

We let $d\in\mathcal{D}_0$ and assume that $\dim \intersd := n \geq m$. We prove that $n= m$ and that the Mahler system \eqref{eq:Mahler_at_0} is regular singular at $0$. We deduce that (\ref{4}) implies (\ref{item3}) and (\ref{2}) implies (\ref{1}). Let $\boldsymbol e_1,\ldots,\boldsymbol e_{n}$ denote a basis of $\intersd$ and let $E$ be the $m(\mu_d-\nu_d+1)\times n$ matrix whose columns are $\boldsymbol e_1,\ldots,\boldsymbol e_{n}$. Since $M_d$ acts as an isomorphism on $\intersd$, there exists $R\in{\rm GL}_{n}\left(\Q\right)$ such that 
\begin{equation}\label{eq:ME=ER}
M_dE=ER.
\end{equation}
We make a block decomposition of $E$ into $\mu_d-\nu_d+1$ matrices $E_{\nu_d},\ldots,E_{\mu_d}$ of size $m\times n$: 
$$
E = \left(\begin{array}{c} E_{\nu_d} \\ \hline \vdots \\\hline E_{\mu_d} \end{array}\right)\, .
$$
We then define by induction on $j>\mu_d$ a matrix $E_j$, setting
\begin{equation}
    \label{eq:definitionE}
E_j = \left( \sum_{(k,\ell)\, : \, k+p\ell=j} \Ainv_{d,k} E_\ell\right)R^{-1}
\end{equation}
where we recall that $\sum_{n\in \Z} \Ainv_{d,n}z^n=\Ainv_d=\puid(A)^{-1}$. Since $j > \mu_d$, the matrices $E_\ell$ contributing to the right-hand side of the equality have an index $\ell < j$. Hence the matrices $E_j$ are well defined. If $j < \nu_d$, we define $E_j := 0$. We stress that \eqref{eq:definitionE} actually holds for any $j \in \Z$ :
\begin{itemize}
    \item by definition, it holds when $j > \mu_d$ ; \item when $\nu_d \leq j \leq \mu_d$, it follows from the fact that $ER= M_dE$ ;
    \item when $j < \nu_d$, it follows from the fact that $N_dE=0$, for $\intersd\subset \ker(N_d)$.
\end{itemize}
We now write $\Gam := \sum_{j\geq \nu_d} E_jz^j$. It is a matrix with $m$ rows, $n$ columns and entries in $\Q((z))$. It follows from \eqref{eq:definitionE} that 
\begin{equation}\label{eq:GammaR}
\Gam R=\Ainv_d \puip(\Gam)\, .
\end{equation}
Since $\boldsymbol e_1,\ldots,\boldsymbol e_{n}$ is a basis of $\intersd$, the columns of $\Gam$ are linearly independent over $\Q$. It follows from Lemma \ref{lem:indeplineairecolonnes} that they are linearly independent over $\Q((z))$ so $n \leq m$. Thus $n=m$ and the matrix $\Gam$ is invertible. In particular (\ref{4}) implies (\ref{item3}). Now, let us define $\Gtr:=\phi_{1/d}(\Gam)$. It follows from \eqref{eq:GammaR} that 
\begin{equation}\label{eq:TR}
\puip(\Gtr)^{-1}A\Gtr = R^{-1} \in {\rm GL}_m\left(\Q\right).
\end{equation}
Thus, the system is regular singular at $0$ and $\Q((z^{1/d}))$-equivalent to a constant system with matrix $R^{-1}$. The matrix $\Gtr$ is an associated gauge transformation. This proves that (\ref{2}) implies (\ref{1}).
\end{proof}

Let ${\bf k} \subset \Q$ denote a number field such that $A \in {\rm GL}_m({\bf k}(z))$. Then the vector space $\intersd$ is defined over ${\bf k}$ and the matrices $E$ and $R$ in the proof of Theorem \ref{thm:Espace} can be chosen with their entries in ${\bf k}$.

\begin{coro}\label{coro:cdn}
Let ${\bf k}\subset \Q $ be a number field and $A \in {\rm GL}_m({\bf k}(z))$. The system \eqref{eq:Mahler_at_0} is regular singular at $0$ if and only if it is $\widehat{{\bf k}(z)}$-equivalent to a matrix in ${\rm GL}_m({\bf k})$, where
$$
\widehat{{\bf k}(z)}:=\bigcup_{d \in \N} {\bf k}\left(\left(z^{1/d}\right)\right)\,,
$$
is the field of Puiseux series with coefficients in ${\bf k}$.
\end{coro}

\section{A concrete algorithm for Theorem \ref{thm:algo}}\label{sec:Algo}

Theorem \ref{thm:Espace} gives the description of a vector space whose dimension characterises the regular singularity at $0$ of a Mahler system \eqref{eq:Mahler_at_0}. In this section we show that the construction of Theorem \ref{thm:Espace} is algorithmic. This provides a proof of Theorem \ref{thm:algo}. Then, we discuss the complexity of this algorithm.

\begin{rem}
In what follows, when discussing the complexity of our algorithms, we shall count the number of operations in $\Q$. However, if ${\bf k} \subset \Q$ is the smallest number field such that $A \in {\rm GL}_m({\bf k}(z))$, our operations are done with elements of ${\bf k}$. To count the number of operations over the rational numbers, one should add a factor $\mathcal O(\M([{\bf k}: \mathbb Q]))$, where $[{\bf k}: \mathbb Q]$ is the degree of ${\bf k}$ over $\mathbb Q$, to the bounds we give.
\end{rem}

\subsection{Description of an algorithm computing a ramification index}

To apply the result of Theorem \ref{thm:Espace}, we first have to find an element $d$ in the set $\mathcal{D}_0$. This integer is related to the valuations at $0$ of the entries of a companion matrix $A_{{\rm comp}}$, $\Q(z)$-equivalent to $A$, which we obtain thanks to the cyclic vector lemma (Theorem \ref{thm:CyclicVector}).

Recall that, from the Cauchy's Theorem (see \cite[Th. 27,2]{Mar66}), the modulus of any root of a nonzero polynomial 
$$
f:=f_0+f_1z+f_2z^2+\cdots + f_h z^h \quad \text{with}\quad f_0,\ldots,f_{h-1}\in\mathbb{C}, f_h\in\mathbb{C}\setminus\lbrace 0\rbrace
$$
is smaller than $1$ plus the max of $\frac{\vert f_k\vert}{\vert f_h \vert}$, $0\leq k\leq h-1$. However, number fields are not necessarily invariant under the map $x \mapsto \vert x \vert$. To stay in the initial base field, we shall not consider directly the absolute value. Let ${\bf k}$ denote a number field such that $A \in {\rm GL}_m({\bf k}(z))$. We fix an embedding ${\bf k} \hookrightarrow \C$. 
We can obtain an upper bound $ V(\xi)\in\mathbb Q$ for the absolute value of any $\xi \in {\bf k}$.
Then, for $f=f_0+f_1z + \cdots + f_hz^h$, $f_i \in {\bf k}$, $f_h\neq 0$, we write
$$
\Vert f \Vert := 1 + \max\left\{V\left(\frac{ f_k}{ f_h }\right),\, 0\leq k\leq h-1 \right\}\geq   1 + \max\left\{ \left\vert \frac{f_k}{f_h}  \right\vert   ,\, 0\leq k\leq h-1 \right\} > 1,
$$
if $h \geq 1$ and $\Vert f \Vert =2$ otherwise. Hence, $\Vert f \Vert\in\mathbb Q$ is greater than the modulus of every root of $f$. We assume that $V(\xi)$ is computable in $\mathcal O(1)$ for any $\xi \in \mathbf k$ so that $\Vert f \Vert$ is computable in $\mathcal O(\deg(f))$.

The following algorithm takes a Mahler system as input, computes a companion matrix $A_{\rm comp}$ such that the systems associated with $A$ and $A_{\rm comp}$ are $\Q(z)$-equivalent, and returns the valuations of the last row of this companion matrix.

\medskip

\begin{algorithm}[H]\label{Algo:Cyclicbis}
\SetAlgoLined
\KwIn{$A\in{\rm GL}_m\left(\Q(z)\right)$, $p\in\mathbb{N}_{\geq 2}$.}
\KwOut{The valuations of the last row of a companion matrix $\Q(z)$-equivalent to $A$.}
Compute $f$ the lcm of the denominators of the entries of $A$.
\\  Write $\widetilde{A}=fA \in{\rm GL}_m\left(\Q[z]\right)$.
\\  Consider $z_0:=\max\left(\Vert f \Vert,\Vert \det(\widetilde{A})\Vert \right)$. 
\\
Compute a solution $\r \in \Q[z]^m$ of \eqref{eq:interpolation} by Newton interpolation.
\\
Let $P$ be the matrix whose rows are $\r_1:=\r$, $\r_{i+1}:=\puip(\r_i)A$, $1\leq i \leq m-1$.
 \\
 \Return the valuation of the entries of $\puip(\r_m)AP^{-1}$.
 \caption{Find the valuation of the entries of the last row of $A_{\rm comp}$}
\end{algorithm}

\medskip
It is clear, from the proof of Theorem \ref{thm:CyclicVector}, that the matrix $\puip(P)AP^{-1}$ is a companion matrix and that its last row is $\puip(\r_m)AP^{-1}$. Now, the following algorithm finds an element of $\mathcal{D}_0$ -- though not necessarily the smallest-- as it was done in the proof of Lemma \ref{lem:ramification_step1}.

\medskip

\begin{algorithm}[H]\label{Algo:dbis}
\SetAlgoLined
\KwIn{$A\in {\rm GL}_m\left(\Q(z)\right)$, $p\in\mathbb{N}_{\geq 2}$.}
\KwOut{An integer $d \in \mathcal D_0$}
Compute $(v_0,\ldots,v_{m-1})$ the valuations of the last row of a companion matrix $\Q(z)$-equivalent to $A$, with Algorithm \ref{Algo:Cyclicbis}.
\\
Compute the lower hull $\mathcal H$ of the set of pairs $\left(p^i,v_i\right)$, $0\leq i\leq m$, with $v_m:=0$.
\\ 
Compute the set $\mathcal S$ of denominators of the slopes of $\mathcal H$ which are coprime with $p$.
\\
\Return ${\rm lcm}(\mathcal S)$.
 \caption{Find some integer $d \in \mathcal D_0$}
\end{algorithm}

\medskip

We compute an upper bound for the complexity of Algorithm \ref{Algo:dbis}. Let us first recall some notations and results.
 Given a $n>0$, we let $\M(n)$ denote the complexity of the product of two polynomials of degree at most $n$, and $\MM(n)$ denote the complexity of the product of two matrices with at most $n$ rows and $n$ columns.
 Let $C\in \mathcal M_m\left(\Q[z]\right)$ with $\det(C)\neq 0$ and let $\delta:=\deg(C)$. The complexity of computing
\begin{itemize}
    \item the determinant of $C$ is $\mathcal O\left({\MM}(m) {\M}(\delta)\left(\log(m)\right)^2\right)$, see \cite{Sto03};
    \item the product $\v C^{-1}$ is $\mathcal O\left(\MM(m)\M(\delta)\log(m)\log(\delta)\right)$, assuming that we know some point at which $C$ is invertible and that the degree of $\v \in \Q[z]^m$ is at most $\delta$, see \cite[Cor. 16]{Sto03};
    \item the inverse of $C$ is $\mathcal O\left(m^2 {\rm M}(m\delta)\log(m\delta)\right)$, see \cite{ZLS15}.
\end{itemize}

\begin{prop}\label{prop:ComplexityCyclic}
The complextiy of Algorithm \ref{Algo:dbis} is
\begin{equation*}
    \label{eq:complexity_cyclic}
\mathcal O\left(\MM(m)\log(m){\M}\left(u\right)\log(u)\right)\quad \mbox{with}\quad u:= (m +\deg(A))p^{m}.
\end{equation*}
\end{prop}

\begin{proof}
We start by computing an upper bound for the complexity of Algorithm \ref{Algo:Cyclicbis}. Assume first that the matrix $A$ has its entries in $\Q[z]$. 
The complexity of computing $\det(A)$ is 
$$
\mathcal O\left(\MM(m)\M(\delta)\left(\log(m)\right)^2\right)
$$
where $\delta=\deg(A)$. 
Then, since $f=1$ here and since $\det(A)$ is a polynomial of degree $\mathcal O(m\delta)$, the computation of $z_0$, which is equal here to $\max\{2,\Vert \det(A) \Vert\}$, can be done with
$\mathcal O(m\delta)$ operations.
To obtain a solution $\r$ of \eqref{eq:interpolation}, we first need to compute the matrices
\begin{equation}
\label{eq:compute_product}
\left(A\left(z_0^{p^k}\right)\ldots A\left(z_0^{p^2}\right) A\left(z_0^p\right) A\left(z_0\right)\right)^{-1},\ 1 \leq k \leq m-2\, .   
\end{equation}
The complexity of taking the $p$th power of a number is $\mathcal O(\log(p))$, thus computing $z_0,z_0^p,\ldots, z_0^{p^{m-2}}$ necessitates $\mathcal O(m\log(p))$ operations. A straightforward evaluation of a polynomial with degree $\ell$ at $n$ points necessitates 
$
\mathcal O\left(n\ell\right)\,
$
operations.
Since the $m^2$ entries of $A(z)$ are polynomials with degree $\delta$, the complexity of computing the matrices $A(z_0),A(z_0^p),\ldots,A(z_0^{p^{m-2}})$ is
$\mathcal O\left(m^3\delta\right)$. We now have to compute $m$ products and inverses of these constant matrices. To sum up, the computation of \eqref{eq:compute_product} can be done with
$$\mathcal O\left(m\log(p)+ m^3\delta+m\MM(m)\right)$$
operations. Then, we compute each of the $m$ entries of $\r$ by doing a Newton interpolation at $m$ points. There, the complexity is
$$\mathcal O(m\M(m)\log(m))$$ (see \cite{BS05}). We use that $\r=\r_1$ and $\r_{k+1}=\puip(\r_k)A$ for every $k$, $1\leq k \leq m-1$, to compute the rows $\r_1,\ldots,\r_m$ of $P$. In particular, 
$$
\deg(\r_k)\leq (m+\delta)p^{k-1}\,.
$$
The computation of $\puip(\r_k)A$ necessitates $m^2$ sums and products of polynomials with degree at most $(m+\delta)p^{k}$. Thus, $\mathcal O\left(m^2 \M\left((m+\delta)p^{k}\right)\right)$ operations suffice to compute $\r_{k+1}$ from $\r_k$.
Hence, once $\r$ is known, one may compute the matrix $P$ with
$$
\mathcal O\left(m^2 \sum_{k=1}^{m-1}\M\left((m+\delta)p^{k}\right)\right)
$$
operations. Then, the complexity of computing $\puip\left(\r_m\right)A$ is
$$
\mathcal{O}\left(m^2\M\left(u\right)\right)\,,
$$
where $u:=(m+\delta)p^m$, 
and the the one of computing $\puip\left(\r_m\right)A P^{-1}$ is
\begin{equation}
    \label{eq:complexity_cyclic_lower}
 \mathcal O\left(\MM(m)\log(m)\M\left(u\right)\log(u)\right) \,. 
\end{equation}
Since \eqref{eq:complexity_cyclic_lower} is greater than the complexity of all the previous steps in Algorithm \ref{Algo:Cyclicbis}, the complexity of Algorithm \ref{Algo:Cyclicbis} is  \eqref{eq:complexity_cyclic_lower}, when $A$ is a matrix with entries in $\Q[z]$.
Assume now that $A$ has rational coefficients. Write $\widetilde A=f A$, with $f \in \Q[z]$ the least common multiple of the denominators of the entries of $A$. Then, by definition, $\deg(\widetilde{A}) \leq \deg(A)$.
Now, the operations with $A=1/f\widetilde{A}$ have the same complexity as the ones with $\widetilde{A}$ and the cost of the computation of $f$ and $\widetilde{A}$ is negligible compared to \eqref{eq:complexity_cyclic_lower}.  Thus, the complexity of Algorithm \ref{Algo:Cyclicbis} is \eqref{eq:complexity_cyclic_lower} for any matrix $A$. Then, the complexity of computing the lower hull in Algorithm \ref{Algo:dbis} is negligible compared to \eqref{eq:complexity_cyclic_lower}. This ends the proof.
\end{proof}

\subsection{Description of the algorithm of Theorem \ref{thm:Espace}}

The following algorithm tests if a given Mahler system is regular singular at $0$.

\medskip
\begin{algorithm}[H]\label{Algo:dfixed}
\caption{Test for the regular singularity of a Mahler system at $0$}
\KwIn{$A\in {\rm GL}_m\left(\Q(z)\right)$, $p\in\mathbb{N}_{\geq 2}$ and the order $n \geq 0$ of truncation.}
\KwOut{Whether or not the system \eqref{eq:Mahler_at_0} is regular singular at $0$ and in that case the constant matrix $\Const$ to which it is equivalent and a truncation of an associated gauge transformation $\Gtr$ at order $n$.}
Compute $d$ with Algorithm \ref{Algo:dbis}.
\\ Compute $\nu_d,\mu_d,M_d,N_d$.
\\ Set $t:=\lceil \log_2(m(\mu_d-\nu_d+1))\rceil$.
\\ \For{$j$ from $1$ to $t$}{ Compute $M_d^{2^j}$.}
 Set $\I_0:= \ker(N_d)$
\\ \For{$\ell$ from $1$ to $t$}{
 Set $\I$ to $\lbrace \x\in\I\mid M_d^{2^{t-\ell}}\x\in \I\rbrace$.}
Set $\inters := M_d^{2^t}\I$.
\\ \If{$\dim \inters = m$}{
   From a basis of $\inters$, compute $R$ and $E_{\nu_d},\ldots,E_{\mu_d}$ as in the proof of Theorem \ref{thm:Espace}.
   \\  \For{$j$ \emph{from} $\mu_d+1$ \emph{to} $\max\{\mu_d+1;dn\}$}{
  Compute $E_j$ from \eqref{eq:definitionE}.}
  Define $\Const:=R^{-1}$.
  \\ \Return``True'', $\Const$ and $\sum_{j =\nu_d}^{dn} E_{j}z^{j/d}$. 
 }
 \Else{ \Return ``False''.}
\end{algorithm}

\medskip
Then, Theorem \ref{thm:algo} is a consequence of the following proposition that we will prove in Section \ref{sec:proof-complexity}.

\begin{prop}
\label{prop:complexity} Algorithm \ref{Algo:dfixed} satisfies the hypothesis of Theorem \ref{thm:algo}.
Apart from the computation of the Puiseux expansion of $\Gtr$, the complexity of Algorithm \ref{Algo:dfixed} is
$$
\widetilde{\mathcal O}\left(m\MM(m)\M((m+\delta)p^m)+mp^m\MM(mv) \right)
$$
where $\delta:=\deg(A)$ and $v:=-(\val(A)+\val(A^{-1}))+1\geq 1$.
\end{prop}

In \cite{Ro20} the author explained how to find the eigenvalues of a constant matrix $\mathcal H$-equivalent to a Mahler system, and the dimension of the associated characteristic space. This is done by solving some explicit equations associated to the slopes of the lower hull of the set of points $(p^i,\val(q_i))$ and by counting the multiplicity. When the system is regular singular at $0$, these eigenvalues are precisely the eigenvalues of the matrix $M_d$ whose associated eigenvectors belong to $\ker N_d$. Then, there are only a finite number of constant matrices in Jordan normal form having this precise set of eigenvalues. Thus, one could test if, for each one of these matrices, there is a basis of solutions in the Puiseux series by applying the cyclic vector lemma (Algorithm \ref{Algo:Cyclicbis}) and by solving $m$ equations of the form \eqref{eq:system_homogene_in_ramification}, where $A$ is a companion matrix, by extending the results of \cite[Algo. 7]{CDDM18} to the inhomogeneous case. By doing so, one could determine if a given Mahler system is regular singular at $0$. However, this method seems less efficient than the one presented in Algorithm \ref{Algo:dfixed}. Furthermore, by doing so, one would possibly have to work in finite extensions of the base number field ${\bf k}$ instead of the base field ${\bf k}$ (see Corollary \ref{coro:cdn}), in contrast to the method presented in this paper.

When the system \eqref{eq:Mahler_at_0} is regular singular at $0$, Algorithm \ref{Algo:dfixed} computes a $\K$-equivalent constant matrix. Furthermore, Roques \cite[§5.2]{Ro18} described fundamental matrices of solutions for constant systems. Precisely, for $c\in\Q^\star$ we let $e_c$ and $\ell$ denote functions such that $\puip\left(e_c\right)=ce_c$ and $\puip(\ell)=\ell+1$. Any constant system has a basis of solutions in $\Q\left[(e_c)_{c \in \Q^\star},\ell\right]$.

\begin{coro}
Consider a system \eqref{eq:Mahler_at_0} which is regular singular at $0$. From Algorithm \ref{Algo:dfixed}, one can compute a fundamental matrix of solutions of \eqref{eq:Mahler_at_0} with entries in $\K\left[(e_c)_{c \in \Q^\star},\ell\right]$.
\end{coro}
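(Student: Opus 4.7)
The plan is to combine the output of Algorithm~\ref{Algo:main} with the explicit construction of a fundamental matrix of solutions for constant Mahler systems recalled just before the statement of the corollary. First, I would run Algorithm~\ref{Algo:main} on \eqref{eq:Mahler_at_0}: since the system is regular singular at $0$, it returns a constant matrix $\Const \in {\rm GL}_m(\Q)$ together with a gauge transformation $\Gtr \in {\rm GL}_m(\K)$ (effectively known to arbitrary order via the recurrence \eqref{eq:definitionE}) such that $\puip(\Gtr)^{-1} A \Gtr = \Const$. The key observation is that if $\T$ is a fundamental matrix of solutions of the constant system $\puip(Z) = \Const Z$, then $\Gtr \T$ is a fundamental matrix of solutions of \eqref{eq:Mahler_at_0}, since $\puip(\Gtr \T) = \puip(\Gtr)\puip(\T) = \puip(\Gtr) \Const \T = A \Gtr \T$ and both factors are invertible.

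Next I would explicitly construct such a $\T$ with entries in $\Q[(e_c)_{c\in\Q^\star},\ell]$. Over $\Q$, I would compute a Jordan decomposition $\Const = P J P^{-1}$ with $P \in {\rm GL}_m(\Q)$ and $J$ block-diagonal with blocks $J_{s_i}(\lambda_i)$, $\lambda_i \in \Q^\star$ (the $\lambda_i$ are nonzero since $\Const$ is invertible). For a single block $J_s(\lambda) = \lambda I + N$ with $N$ the nilpotent superdiagonal part, I would verify that
$$
\T_s(\lambda) := e_\lambda \sum_{k=0}^{s-1} \binom{\ell}{k} \lambda^{-k} N^k
$$
satisfies $\puip(\T_s(\lambda)) = J_s(\lambda)\, \T_s(\lambda)$. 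Indeed, using $\puip(e_\lambda) = \lambda e_\lambda$, $\puip(\ell) = \ell + 1$ and the Pascal-type identity $\binom{\ell+1}{k} - \binom{\ell}{k} = \binom{\ell}{k-1}$, a short computation and the fact that $N^s = 0$ yield the required relation. Then $\T := P \cdot \mathrm{diag}(\T_{s_i}(\lambda_i))$ is an invertible matrix with entries in $\Q[(e_c)_{c\in\Q^\star},\ell]$ satisfying $\puip(\T) = \Const \T$.

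Finally, the matrix $\Gtr \T$ is a fundamental matrix of solutions of \eqref{eq:Mahler_at_0}, and its entries lie in $\K[(e_c)_{c\in\Q^\star},\ell]$ as products of elements of $\K$ (coming from $\Gtr$) with elements of $\Q[(e_c)_{c\in\Q^\star},\ell]$ (coming from $\T$). All steps are effective: Algorithm~\ref{Algo:main} produces $\Const$ and any truncation of $\Gtr$, while computing $P$, $J$ and the polynomial coefficients $\binom{\ell}{k}\lambda^{-k}$ is standard linear algebra over $\Q$. There is really no significant obstacle; the only subtlety worth spelling out is the verification of the Jordan block formula for $\T_s(\lambda)$, which is a formal identity in $\Q[(e_c)_{c\in\Q^\star},\ell]$ using the action of $\puip$ on $e_\lambda$ and $\ell$.
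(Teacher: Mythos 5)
Your proposal is correct and follows essentially the same route as the paper: run Algorithm~\ref{Algo:main} to get $\Const$ and $\Gtr$, take a fundamental matrix $\T$ of the constant system with entries in $\Q[(e_c)_{c\in\Q^\star},\ell]$, and observe that $\Gtr\T$ solves \eqref{eq:Mahler_at_0} with entries in $\K[(e_c)_{c\in\Q^\star},\ell]$. The only difference is that the paper simply cites Roques for the existence of $\T$, whereas you reprove it via the explicit Jordan-block formula $e_\lambda\sum_k\binom{\ell}{k}\lambda^{-k}N^k$, which checks out (the invertibility of $\Const$ guaranteeing $\lambda\neq 0$, as you note).
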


For example, one can take respectively for $e_c$ and $\ell$, the functions $\log(z)^{\log(c)/\log(p)}$ and  $\log\left(\log(z)\right)/\log(p)$. Before proving Proposition \ref{prop:complexity}, we make some observations about the shape of the matrices $M_d$ and $N_d$.

\subsection{On the shape of \texorpdfstring{$M_d$}{Md} and \texorpdfstring{$N_d$}{Nd}}

Algorithm \ref{Algo:dfixed} requires some calculations with the matrices $M_d$ and $N_d$. Naively, it should necessitate $\MM(n)$ operations where $n$ is at least the number of rows and the number of columns of $M_d$ and $N_d$. However, by looking more closely at the shape of $M_d$ and $N_d$, we will show that it can be lowered to $d\MM(n/d)$.

\begin{defi}
Let $D=\left(D_{i,j}\right)_{1\leq i\leq r, 1\leq j\leq s}$ be a block matrix with $D_{i,j}\in\mathcal{M}_m\left(\Q\right)$. We say that $D$ is a \emph{$d$-gridded matrix} if for all $(i_0,j_0)\in \lbrace 1,\ldots,r\rbrace\times\lbrace 1,\ldots,s\rbrace$ such that $D_{i_0,j_0}$ is nonzero, the matrices $D_{i_0,j}$, $D_{i,j_0}$ with $i\not\equiv i_0\mod(d)$ and $j\not\equiv j_0\mod(d)$ are zero matrices. Let $\sigma$ be a permutation of the set $\{1,\ldots,d\}$. We say that $\sigma$ is \emph{associated with} the $d$-gridded matrix $D$ if $D_{i,j}=0$ for every $(i,j) \in \{1,\ldots,r\}\times\{1,\ldots,s\}$ with $j \not\equiv \sigma(i)\mod(d)$.
\end{defi}

\begin{lem}\label{lem:Operationgridded}
Let $D=(D_{i,j})_{1\leq i\leq r, 1\leq j\leq s}$ and $E=(E_{i,j})_{1\leq i\leq s, 1\leq j\leq t}$ be two $d$-gridded matrices with $D_{i,j}, E_{i,j}\in\mathcal{M}_m\left(\Q\right)$ and, respectively, $\sigma_D$ and $\sigma_E$ their associated permutation. We write $u:=\max\left(r,s,t\right)$. The computation of the product $DE$ can be done with complexity
$$
\mathcal O(d\MM(mu/d)).
$$
Furthermore, $DE$ is a $d$-gridded matrix with associated permutation $\sigma_E\circ \sigma_D$.
\end{lem}

\begin{proof}
We let 
$D_{n}$ (respectively $E_{n}$) denote the block matrices $(D_{n+k d,\sigma_D(n)+\ell d})_{k,\ell}$ 
(\mbox{respectively} $(E_{n+k d,\sigma_E(n)+\ell d})_{k,\ell}$) for any $n \in \{1,\ldots,d\}$. 
Let $n_0 \in \{1,\ldots,d\}$, write $F_{n_0}:=D_{n_0} E_{\sigma_D(n_0)}$
and consider $F_{n_0}:= (F_{n_0,k,\ell})_{k,\ell}$ its block decomposition, where $F_{n_0,k,\ell}\in\mathcal{M}_m\left(\Q\right)$. 
For any $i\in\{1,\ldots,r\}$, write $i=n_0+kd$ with $n_0 \in \{1,\ldots,d\}$, $k\in\mathbb{N}$ and for any $j\in\{1,\ldots,t\}$, let
$$
G_{i,j}:=
\left\{ \begin{array}{cl}
F_{n_0,k,\ell} & \text{ if } j=\sigma_E\circ \sigma_D(n_0)+\ell d\, \text{ for some } \ell\in\mathbb{N} \, ,
\\ 
0 & \text{ otherwise}\,. 
\end{array} \right.
$$
Then $DE=(G_{i,j})_{i,j}$ and it is a $d$-gridded matrix whose associated permutation is $\sigma_E\circ \sigma_D$. The complexity of computing the product of two permutations of $\{1,\ldots,d\}$ is $\mathcal O(d)$. Then, the complexity of computing each matrix $F_n$ is $\mathcal O(\MM(mu/d))$. Thus, the complexity of computing $DE$ is
$$
\mathcal O(d+d\MM(mu/d))=\mathcal O(d\MM(mu/d))\,. 
$$
\end{proof}

\begin{rem}
The computation of a basis of the (right-)kernel of a $d$-gridded matrix can be done with the same complexity as the product of two $d$-gridded matrices. Note that we can add some zero column vectors to the column vectors of the kernel obtained in this way in order to form a $d$-gridded matrix. Similarly, one can compute a basis of the intersection of the vector spaces spanned by the columns of two $d$-gridded matrices with the same complexity. The basis obtained being a subset of the columns of one of the matrices, one can complete it with some zero column vectors in order to form a new $d$-gridded matrix.
\end{rem}

\begin{lem}\label{lem:M_N_gridded}
Let $d \in \mathcal D_0$. The matrices $M_d$ and $N_d$ are $d$-gridded matrices.
\end{lem}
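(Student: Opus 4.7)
My plan is very short because the lemma reduces to a divisibility observation. The key input is the structure of $\Ainv(d) = \puid(A^{-1})$: since $\puid$ sends $z$ to $z^d$, writing $A^{-1}(z) = \sum_{k \geq \val(A^{-1})} A^{-1}_k z^k$ with $A^{-1}_k \in \mathcal{M}_m(\Q)$ gives
\[
\Ainv(d)(z) \;=\; A^{-1}(z^d) \;=\; \sum_{k\geq \val(A^{-1})} A^{-1}_k\, z^{dk}.
\]
Hence the coefficient $\Ainv_n(d)$ is the zero matrix whenever $d \nmid n$, and equals $A^{-1}_{n/d}$ otherwise.

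Next I would apply this to the definitions of $M_d$ and $N_d$. Both are block matrices whose block at position $(i,j)$ is $\Ainv_{i-pj}(d)$. By the observation above, this block can be nonzero only when $d \mid (i - pj)$, i.e.\ $i \equiv pj \pmod d$. Since $d \in \mathcal D$ satisfies $\gcd(p,d)=1$, multiplication by $p$ is a bijection of $\mathbb Z/d\mathbb Z$.

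From here the $d$-gridded property is a direct verification. Assume block $(i_0,j_0)$ of $M_d$ is nonzero; then $i_0 \equiv p j_0 \pmod d$. For any $j$ with $j \not\equiv j_0 \pmod d$ one has $pj \not\equiv pj_0 \equiv i_0 \pmod d$, so the block $(i_0,j)$ vanishes. Likewise, for any $i$ with $i \not\equiv i_0 \pmod d$ one has $i \not\equiv pj_0 \pmod d$, so the block $(i,j_0)$ vanishes. The argument for $N_d$ is word-for-word the same.

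There is no real obstacle; the only point to be careful about is that $M_d$ and $N_d$ are not indexed starting from $1$ (the blocks of $M_d$ are indexed by $\nu_d \leq i,j \leq \mu_d$, and those of $N_d$ by a shifted range), whereas the definition of a $d$-gridded matrix uses indices in $\{1,\dots,r\}\times\{1,\dots,s\}$. This is harmless: the condition ``$i \not\equiv i_0 \pmod d$'' is invariant under a common translation of the indices, so the property transfers unchanged after reindexing.
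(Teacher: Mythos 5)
Your proof is correct and follows essentially the same route as the paper: both rest on the observation that $\puid(A^{-1})$ has nonzero Laurent coefficients only at indices divisible by $d$, so a block $\Ainv_{i-pj}(d)$ can be nonzero only when $i\equiv pj \pmod d$, and then use $\gcd(p,d)=1$ to rule out two nonzero blocks in the same row or column with incongruent indices. The only difference is cosmetic: the paper's proof also records explicit associated permutations $\sigma_M,\sigma_N$ (used later in the complexity analysis), which the lemma itself does not require.
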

\begin{proof}
Recall that 
$$
M_d:=(\Ainv_{d,i-pj})_{\nu_d \leq i,j\leq \mu_d} \text{ and } N_d:=(\Ainv_{d,i-pj})_{
\val\left(\Ainv\right)+p\nu_d \leq i \leq \nu_d-1,\, \nu_d \leq j \leq \mu_d}$$
where $\puid(A^{-1}):=\sum_{n} \Ainv_{d,n} z^n$. In particular, $\Ainv_{d,i-pj}=0$ if $d$ does not divide $i-pj$. Thus if $\Ainv_{d,i_0-pj_0}\neq 0$ then $\Ainv_{d,i-pj_0}=0$ for all $i$ such that $i\not\equiv i_0\  (\textrm{mod}\ d)$. 
Moreover, since $p$ and $d$ are relatively prime, if $\Ainv_{d,i_0-pj_0}\neq 0$ then $\Ainv_{d,i_0-pj}=0$ for all $j$ such that $j\not\equiv j_0\ (\textrm{mod}\ d)$. Associated permutations to these matrices are $\sigma_M$ and $\sigma_N$ such that, for every $k \in \{1,\ldots,d\}$, 
$$
p\sigma_M(k) \equiv (p-1)(1-\nu_d)+k\, (\textrm{mod}\ d) 
$$
$$
p\sigma_N(k) \equiv \val\left(\Ainv_d\right)+p-1+k\, (\textrm{mod}\ d)\, .
$$
\end{proof}

\subsection{Proof of Proposition \ref{prop:complexity}}\label{sec:proof-complexity}

We recall that 
$$
\intersd = \intersd^{+} \bigcap \intersd^{-}
$$
where 
$$
\intersd^{+} := \bigcap_{n \in \N} M_d^n \ker(N_d)\,,\quad \intersd^{-}:= \bigcap_{n \in \N} \ker (N_dM_d^n).
$$
We first use the two following lemmas to prove that the vector space $\inters$ in Algorithm \ref{Algo:dfixed} is equal to the vector space $\intersd$.

\begin{lem}\label{lem:rangeX}
Let $c_d := m\left(\mu_d-\nu_d+1\right)$. For any $c \geq c_d$, the vector space $\intersd$ is the image of 
$\bigcap_{n=0}^{c-1} \ker(N_dM_d^n)$
under the left multiplication by $M_d^{c}$.
\end{lem}

\begin{proof}
We first prove that $\intersd^- = \bigcap_{n=0}^{c-1} \ker(N_dM_d^n)$. Write $\V_n=\cap_{k=0}^{n-1} \ker\left( N_dM_d^k\right)$. It is clear that if $\V_n=\V_{n+1}$ then $\V_\ell=\V_n$ for all $\ell\geq n$. Thus the sequence $(\V_n)_{n \geq 1}$ is decreasing and then stationary. Since $\dim \V_1 \leq c_d \leq c$, we must have $\V_{c} = \V_{c+1}$ and $\intersd^-=\lim_{n\to \infty} \V_n=\V_{c}$.

Now, write $\W_n= M_d^n\intersd^{-}$. Since $M_d\intersd^-\subset \intersd^-$, the sequence $(\W_n)_{n\in \N}$ is non-increasing. We prove that 
$\intersd = \lim_{n\to \infty } \W_n$.
Since $\intersd^{-} \subset \ker(N_d)$ we have
$$
\lim_{n\to \infty } \W_n= \bigcap_{n \in \N} M_d^n\intersd^{-}  \subset  \bigcap_{n \in \N} M_d^n\ker(N_d)  \cap \intersd^{-} \subset \intersd\,.
$$
It remains to prove that $\intersd \subset \W_n$ for every $n$. We argue by induction on $n$. When $n=0$ it is immediate since $\W_0=\intersd^{-}$. Assume now that $\intersd\subset \W_n$ for some $n\geq 0$. Let $x\in\intersd$. It follows from Lemma \ref{lem:X_max} that $M_d\intersd = \intersd$. Hence, there exists $y\in\intersd$ such $x=M_dy$. By assumption, $y\in \W_n=M_d^n\intersd^-$ so $x\in M_d^{n+1}\intersd^- =\W_{n+1}  $, which concludes the induction. Now, arguing as in the first part of the proof, $(\W_n)_{n \in \N}$ is stationary after the rank $c_d$. In particular, since $c\geq c_d$, $\intersd=\lim_{n\to \infty }\W_n = \W_{c}$.

\end{proof}

Let $t$ be the least integer such that $2^t \geq m(\mu_d-\nu_d+1)$. We now define recursively a finite sequence of vector spaces $(\I_\ell)_{0\leq \ell \leq t}$ setting
$$
\I_0:= \ker(N_d) \text{ and } \I_\ell:=\lbrace \x\in\I_{\ell-1}\mid M_d^{2^{t-\ell}}\x\in \I_{\ell-1}\rbrace
$$
\begin{lem}\label{lem:constructionX}
We have
$$
\inters_d=M_d^{2^t}\I_t\,.
$$
\end{lem}
\begin{proof}
One checks by induction on $\ell\in\lbrace 0,\ldots, t\rbrace$ that
$$
\I_\ell =\bigcap_{n=0}^{2^\ell-1} \ker\left(N_d M_d^{n2^{t-\ell}}\right)\,.
$$ 
Thus, $\I_t=\bigcap_{n=0}^{2^t-1} \ker(N_d M_d^n)$ and the result follows from  Lemma \ref{lem:rangeX}.
\end{proof}

\begin{proof}[Proof of Proposition \ref{prop:complexity}] Let $d \in \mathcal D_0$ be given by Algorithm \ref{Algo:dbis}. We infer from Lemma \ref{lem:constructionX} that the vector space $\inters$ in Algorithm \ref{Algo:dfixed} is equal to $\intersd$. Thus, from Theorem \ref{thm:Espace}, Algorithm \ref{Algo:dfixed} returns ``true'' if and only if the system is regular singular at $0$. Then, arguing as in the proof of Theorem \ref{thm:Espace}, one checks that $A$ is $\mathbf K$-equivalent to $R^{-1}$ and that $\sum_{j=\nu_d}^{dn} E_j z^{j/d}$ are the first coefficients in the Puiseux expansion of an associated gauge transformation.

To compute the complexity, we follow the script of Algorithm \ref{Algo:dfixed}. Let $\delta:=\deg(A)$. From Proposition \ref{prop:ComplexityCyclic} Algorithm \ref{Algo:dbis} computes the integer $d$ with
\begin{equation}\label{eq:complexity_Algo2}
\mathcal O\left(\MM(m)\log(m)\M\left(u\right)\log(u) \right)
\end{equation}
operations, where $u:= (m +\delta)p^{m}$.
To compute $M_d$ and $N_d$ one needs to compute the Laurent series expansion of $A^{-1}$ between 
$\val\left(A^{-1}\right)$ and $(\mu_d-p\nu_d)/d$. The computation of the inverse of $A$ can be done with complexity
\begin{equation}
\label{eq:compute_inv_A}
\mathcal O(m^2\M(m\delta)\log(m\delta))
\end{equation}
Newton's method allows to compute the $n$ first terms in the Laurent series expansion of a rational function of degree at most $n$ with complexity $\mathcal O(\M(n))$. One checks that $\deg\left(A^{-1}\right)\leq m\delta$. Let $v:=-(\val(A)+\val(A^{-1}))+1\geq 1$. One has
$$
 n_0:=\frac{\mu_d-p\nu_d}{d}-\val\left(A^{-1}\right)=\mathcal O\left(v\right)  
$$ 
and $v\leq m\delta$. Thus the complexity of computing the first $n_0$ terms of the Laurent expansion of the $m^2$ entries of $A^{-1}$ is $\mathcal{O}\left(m^2\M\left(m\delta\right)\right)$, which is negligible compared to \eqref{eq:compute_inv_A}. 
Thus, the computation of $M_d$ and $N_d$ can be done with complexity \eqref{eq:compute_inv_A}.
Let $t$ be the least integer such that $2^t>m(\mu_d-\nu_d+1)$. The cost of computing $t$ is negligible.
We compute $M_d,M_d^2,\ldots,M_d^{2^{t-1}}$. The number of rows and columns of $M_d$ being $\mathcal O(mdv/p)$, it follows from Lemma \ref{lem:Operationgridded} that is necessitates $\mathcal O(t d\MM(mv/p))$ operations. 
We compute $\I_0,\ldots,\I_t$.
Since $N_d$ has $\mathcal O(mdv)$ rows and columns, the complexity of computing $\I_0$ is $\mathcal O(d\MM(mv))$. Knowing $\I_{\ell-1}$, the complexity of computing $\I_\ell$ is $\mathcal O(d\MM(mv/p))$. Thus, the complexity of computing the whole sequence is
$$
\mathcal{O}\left(td\MM(mv/p)+d\MM(mv)\right)\,.
$$
We now compute $\inters$. Since we know $M_d^{2^{t-1}}$, the complexity of computing $M_d^{2^t}\I_t$ is $\mathcal O(d\MM(mv/p))$.
Since $d \leq p^m$ and since $t=\mathcal O(\log(mp^{m-1}v))$, the complexity of the computation of $\inters$ is
\begin{equation}\label{eq:complexite_inters}
\mathcal O\left(p^m\MM(mv/p)\log(mp^{m-1}v) + p^m\MM(mv)\right)\,.
\end{equation}
Now, \eqref{eq:compute_inv_A} is negligible with respect to \eqref{eq:complexity_Algo2}. Thus, Algorithm \ref{Algo:dfixed} returns if a system is regular singular or not with
$$
\mathcal O\left(\MM(m)\log(m)\M(u)\log(u)+p^m\MM(mv/p)\log(mp^{m-1}v) + p^m\MM(mv)\right)
$$
operations. Using the notation $\widetilde{\mathcal O}$ and, remembering that $\log(p^m)=m\log(p)$, we obtain the bound we want.
\end{proof}

\begin{rem}
In Algorithm \ref{Algo:dfixed}, we chose to compute first the integer $d$ thanks to the cyclic vector lemma, Algorithm \ref{Algo:Cyclicbis} and Algorithm \ref{Algo:dbis}. Then we computed the vector space $\intersd$ with this $d$. One could ask if running the algorithm for every $d \in \mathcal D$ could be faster. It does not seem to be the case. Since we only have to compute the inverse of $A$ once and since $\mathcal D$ has $\mathcal O(p^{m})$ elements, running the algorithm for every $d \in \mathcal D$, without using Algorithm \ref{Algo:dbis}, would necessitate
$$
\widetilde{ \mathcal O}(m^2\M(m\deg(A))+ mp^{2m}\MM(mv))
$$
operations. When $\deg(A)$ is large compared to other parameters, it can be smaller than the complexity of Algorithm \ref{Algo:dfixed}. However, we have to pay a factor $p^{2m}$ instead of $p^{m}$.
\end{rem}

\section{Examples}\label{sec:Example}

In this section, we study the regular singular property of some particular systems.

\subsection{Systems of size 1}

We consider a system of size $1$:

\begin{equation}\label{eq:eq_order_1}
\puip(y)=ay    
\end{equation}
where $a \in \Q(z)$, $a\neq 0$.
\begin{prop}
\label{prop:homogen_order1_RS}
Any system of size $1$ is regular singular at $0$.
\end{prop}

\begin{proof}
We consider the equation \eqref{eq:eq_order_1}. Let $\nu$ denote the valuation at $0$ of $a$ and define $\psi:=z^{\nu/(p-1)}$. Then, the system $\puip(y)=by$ with $b:=\puip\left(\psi\right)^{-1}a\psi$ is strictly Fuchsian at $0$. Thus, the homogeneous equation \eqref{eq:eq_order_1} is $\Q\left(\left(z^{\nu/(p-1)}\right)\right)$-equivalent to an equation which is strictly Fuchsian at $0$. \textit{A fortiori}, \eqref{eq:eq_order_1} is regular singular at $0$.
\end{proof}

\subsection{An equation of order 2}\label{subsec:exempleCDDM}
Consider the $3$-Mahler equation:
\begin{multline}
\nonumber
z^3(1 - z^3 + z^6 )(1 - z^7 - z^{10}) \phi_3^2(y)
- (1 - z^{28} - z^{31} - z^{37} - z^{40}) \phi_3(y)
\\ + z^6 (1 + z)(1  - z^{21} - z^{30} )y = 0\, .
\end{multline}
The matrix of the $3$-Mahler system associated with this equation is
$$
A(z):=\begin{pmatrix}
0& 1
\\
- \frac{ z^3 (1 + z)(1  - z^{21} - z^{30} )}{(1 - z^3 + z^6 )(1 - z^7 - z^{10}) } & 
\frac{1 - z^{28} - z^{31} - z^{37} - z^{40}}{z^3(1 - z^3 + z^6 )(1 - z^7 - z^{10}) } 
\end{pmatrix}\, .
$$
We propose to check whether or not the $3$-Mahler system associated with this matrix is regular singular at $0$. Since we already know a homogeneous linear equation associated with this system, it is not necessary to run Algorithm \ref{Algo:Cyclicbis}. Algorithm \ref{Algo:dbis} applied to this system returns $d:=2$. We now run Algorithm \ref{Algo:dfixed} with $d=2$. We have $\val(A)=-3$, $\val(A^{-1})=-6$ and thus $
\nu_2=-3$  and $\mu_2= 6$. In that case, the vector space $\inters_2$ is spanned by the transpose of the two linearly independent vectors
$$
\begin{array}{c}
(0, 1,0,0, 1 ,0,0,0, -1 ,0,0,0, 1 , -1 ,0,0, -1,0,0,0)\, ,
\\ 
(0,0,0,0,0,0,0,0,0,0,0,0,0,0,0,0,0,0, -1,0)\,.
\end{array}
$$
In particular, $\inters_2$ has dimension $2$ and, from Theorem \ref{thm:Espace}, the system is regular singular at $0$. One can check that these vectors are eigenvectors of the matrix $M_2$ for the eigenvalue $1$. Thus the matrix $R$ is the identity matrix of size $2$. In particular, the associated gauge transformation $\Gtr$ given by Algorithm \ref{Algo:dfixed} is a fundamental matrix of solutions because it satisfies 
$$
\phi_3\left(\Gtr\right)^{-1}A\Psi = \id_2.
$$
From these two vectors, we can compute the first terms of the Puiseux expansion of $\Gtr$
$$
\Gtr = \begin{pmatrix}
f_1 & f_2 \\
f_3 & 0
\end{pmatrix} + \mathcal O(z^{17/2})
$$
with 
\begin{eqnarray*}
f_1(z)& =& z^{-1/2}-z^{1/2}+z^{3/2}-z^{5/2}+z^{7/2}-z^{9/2}+z^{11/2}-z^{13/2}+z^{15/2}\,,\\ f_2(z) &=& -z^3+z^4-z^5+2z^6-2z^7+2z^8\,,
\\ f_3(z) &=& z^{-3/2}-z^{3/2}+z^{9/2}-z^{15/2}\,.
\end{eqnarray*}

\begin{rem}
Note that this example is the same as the one that the authors of \cite{CDDM18} chose to illustrate their paper.  
\end{rem}

\subsection{Systems coming from finite deterministic automata}

As mentioned in the introduction, Mahler systems are related to the automata theory. Indeed, the generating function of an automatic sequence (see \cite{AS03} for a definition) is solution of a Mahler equation. Numerous famous automatic sequences are related to homogeneous or inhomogeneous Mahler equations of order $1$. This is, for example, the case of the \textit{Thue-Morse sequence}, the \textit{regular paper-folding sequence}, the sequences of powers of a given integer, the characteristic sequence of \textit{triadic Cantor integers} -- those whose base-$3$ representation contains no $1$. Thus, their associated systems are regular singular at $0$. 

Among the sequences satisfying equations with an order greater than $1$, a famous one is the \textit{Baum-Sweet sequence}, the characteristic sequence of integers whose binary expansion has no blocks of consecutive $0$ of odd length. The system associated with this sequence is strictly Fuchsian at $0$ and thus regular singular at $0$. Another important one is the Rudin-Shapiro sequence whose general term is
$$
\left\{\begin{array}{cccl} a_n & = & 1 & \text{ if the number of occurrences of two consecutive } 1 
\\ &&&\text{ in the binary expansion of } n \text{ is even}
\\ a_n&=&-1& \text{ otherwise}.
\end{array}\right.
$$ 
Its generating series $f:=\sum_{n\in \N} a_n z^n$ satisfies the equation
$$
\phi_2\left(\begin{array}{c} f(z) \\ f(-z)\end{array}\right)
=\frac{1}{2}
\left(\begin{array}{cc} 1 & 1 \\ \frac{1}{z}& \frac{-1}{z}  \end{array}\right)\left(\begin{array}{c} f(z) \\ f(-z)\end{array}\right)\,.
$$
This system is not regular singular. Indeed, Algorithm \ref{Algo:dbis} returns $d=3$ and we have $\dim \inters_3=1$. Thus, Algorithm \ref{Algo:dfixed} returns `` False ''.
 
The regular singular property can be seen as ``normal'' for Mahler systems since a sufficient condition is to be strictly Fuchsian at $0$. However, the generating series of an automatic sequence satisfies a Mahler system with a very precise shape: $A^{-1}(0)$ is well defined and has at most one nonzero entry in each column. Among these systems, the strictly Fuchsian property is more occasional.

\section{Open problems}\label{sec:OpenQuestions}

We discuss here some open problems about the regular singularity at $0$ of a Mahler system.

\subsection{The inverse matrix system}

Let $A \in {\rm GL}_m(\Q(z))$ and $p \geq 2$ be an integer. If the $p$-Mahler system with matrix $A$ is strictly Fuchsian at $0$, then the $p$-Mahler system with matrix $A^{-1}$ is also strictly Fuchsian at $0$ (and hence, regular singular at $0$). This property does not extend to regular singular systems. For example, if $A$ denotes the matrix of the regular singular system in subsection \ref{subsec:exempleCDDM}, the $3$-Mahler system associated with $A^{-1}$ is not regular singular at $0$. We ask the following question.

\noindent \emph{Is there a characterisation of matrices $A$ such that the $p$-Mahler systems associated with both $A$ and $A^{-1}$ are regular singular at $0$?}

\subsection{Changing the Mahler operator}

Assume that a system is strictly Fuchsian at $0$. If we change the integer $p$ then the system remains strictly Fuchsian at $0$ (hence regular singular at $0$). This property does not extend to regular singular systems. Indeed, the $3$-Mahler system of subsection \ref{subsec:exempleCDDM} is regular singular at $0$, while the $2$-Mahler system with the same matrix is not. Similarly, the $p$-Mahler system associated with this matrix is not regular singular when $p \in \{4,\ldots,30\}$ (and probably beyond). Similarly, the companion system associated with the $p$-Mahler equation
$$
(z^{11} + z^{13}) \puip^2(y) + (-1/z - z -z^6+z^{10}) \puip(y) + (1-z)y = 0
$$
is regular singular at $0$ for $p=2$ and $p=4$ but not for $p\in\{3,5,6,\ldots,100\}$ (and probably beyond). It seems that for a matrix $A \in {\rm GL}_m\left(\Q(z)\right)$ the $p$-Mahler system associated with $A$ is either regular singular at $0$ for every integer or for finitely many (possibly none) integers $p \geq 2$.

\noindent \emph{Is that true that only these two situations may occur?}

\bigskip

\noindent{\bf \itshape Acknowledgement.}\,\,--- The authors would like to thank Julien Roques for the valued discussions and his lights on his paper \cite{Ro20}, Thomas Dreyfus for his insights about the paper \cite{CDDM18} and Boris Adamczewski for his feedback on this work. They are grateful to the referees for their valuable advice on first versions of this paper.

\end{document}